\newcommand{\CH}[1]{\text{$CH(#1)$}}
\newcommand{\HC}[1]{\text{$H(#1)$}}
\newcommand{\M}[2]{\text{$M_{#1}(#2)$}}
\newcommand{\Min}[2]{\text{\sf Min$(#1,#2)$}}
\newcommand{\Cut}[2]{\text{\sf Cut$(#1,#2)$}}
\newcommand{\Tangent}[2]{\text{\sf Tangent$(#1,#2)$}}
\newcommand{\UT}[2]{\text{\em UpperTangent$(#1,#2)$}}
\newcommand{\LT}[2]{\text{\em LowerTangent$(#1,#2)$}}
\newcommand{\fn}[1]{f(#1)}
\newcommand{\Ln}[1]{L(#1)}
\newcommand{\bt}{\beta}
\newcommand{\LC}[1]{{\em left$(#1)$}}
\newcommand{\RC}[1]{{\em right$(#1)$}}
\newcommand{\Kn}[1]{K#1}
\newcommand{\e}[2]{e_{#1}{(#2)}}
\newcommand{\pmp}[1]{pmp(#1)}
\title{Packing Plane Perfect Matchings into a Point Set\thanks{Research supported by NSERC.}}
\author{
Ahmad Biniaz\thanks{School of Computer Science, Carleton University, 
                    Ottawa, Canada.}
\and 
Prosenjit Bose\footnotemark[2]
\and
Anil Maheshwari\footnotemark[2]
\and 
Michiel Smid\footnotemark[2]
}
\date{\today}
\newtheorem{lemma}{Lemma}
\newtheorem{corollary}{Corollary}
\newtheorem{theorem}{Theorem}
\newtheorem{observation}{Observation}
\begin{document}

\maketitle

\begin{abstract}
Given a set $P$ of $n$ points in the plane, where $n$ is even, we consider the following question: How many plane perfect matchings can be packed into $P$? We prove that at least $\lceil\log_2{n}\rceil-2$ plane perfect matchings can be packed into any point set $P$. For some special configurations of point sets, we give the exact answer. We also consider some extensions of this problem.
\end{abstract}
\section{Introduction}
\label{introduction-section}
Let $P$ be a set of $n$ points in general position (no three points on a line) in the plane. A {\em geometric graph} $G=(P,E)$ is a graph whose vertex set is $P$ and whose edge set $E$ is a set of straight-line segments with endpoints in $P$. We say that two edges of $G$ {\em cross} each other if they have a point in common that is interior to both edges. Two edges are {\em disjoint} if they have no point in common. A subgraph $S$ of $G$ is said to be {\em plane} ({\em non-crossing} or {\em crossing-free}) if its edges do not cross. A {\em plane matching} is a plane graph consisting of pairwise disjoint edges. Two subgraphs $S_1$ and $S_2$ are {\em edge-disjoint} if they do not share any edge. A {\em complete geometric graph} $\Kn{(P)}$ is a geometric graph on $P$ which contains a straight-line edge between every pair of points in $P$. 

We say that the sequence $S_1,S_2,S_3,\dots$ of subgraphs is {\em packed into} $\Kn{(P)}$, if the subgraphs in this sequence are pairwise edge-disjoint. In a packing problem, we ask for the largest number of subgraphs of a given type that can be packed into $\Kn{(P)}$. Among all subgraphs of $\Kn{(P)}$, {\em plane perfect matchings}, {\em plane spanning trees}, and {\em plane spanning paths} are of interest. That is, one may look for the maximum number of plane spanning trees, plane Hamiltonian paths, or plane perfect matchings that can be packed into $\Kn{(P)}$. Since $\Kn{(P)}$ has $\frac{n(n-1)}{2}$ edges, at most $\frac{n}{2}$ spanning trees, at most $\frac{n}{2}$ spanning paths, and at most $n-1$ perfect matchings can be packed into it. 

A long-standing open question is to determine if the edges of $\Kn{(P)}$ (where $n$ is even) can be partitioned into $\frac{n}{2}$ plane spanning trees. In other words, is it possible to pack $\frac{n}{2}$ plane spanning trees into $\Kn{(P)}$? If $P$ is in convex position, the answer in the affirmative follows from the result of Bernhart and Kanien~\cite{Bernhart1979}. For $P$ in general position, Aichholzer et al.~\cite{Aichholzer2014} prove that $\Omega(\sqrt{n})$ plane spanning trees can be packed into $\Kn{(P)}$. They also show the existence of at least 2 edge-disjoint plane spanning paths. 

In this paper we consider a closely related question: How many plane perfect matchings can be packed into $\Kn{(P)}$, where $P$ is a set of $n$ points in general position in the plane, with $n$ even? 
\subsection{Previous Work}
\label{previous-work-section}
\subsubsection{Existence of Plane Subgraphs}
The existence of certain plane subgraphs in a geometric graph on a set $P$ of $n$ points is one of the classical problems in combinatorial and computational geometry. 

One of the extremal problems in geometric graphs which was first studied by Avital and Hanani~\cite{Avital1966}, Kuptiz~\cite{Kupitz1979}, Erd\H{o}s~\cite{Erdos1946}, and Perles (see reference \cite{Toth1999}) is the following. What is the smallest number $\e{k}{n}$ such that any geometric graph with $n$ vertices and more than $\e{k}{n}$ edges contains $k+1$ pairwise disjoint edges, i.e., a plane matching of size at least $k+1$. Note that $k\le\lfloor n/2\rfloor-1$. By a result of Hopf and Pannwitz~\cite{Hopf1934} and Erd\H{o}s~\cite{Erdos1946}, $\e{1}{n}=n$, i.e., any geometric graph with $n+1$ edges contains a pair of disjoint edges. Specifically, they showed that the diameter of a point set, i.e., the maximum distance among $n$ points in the plane, can be repeated $n$ times. Let $G_{max}(P)$ be the geometric graph on $P$ obtained by connecting each pair of points with the maximum distance by an edge. If $G_{max}(P)$ contains two disjoint edges $pq$ and $rs$, then the convex hull of $\{p,q,r,s\}$ forms either a triangle or a quadrilateral. In both cases there is a distance longer than $|pq|$ and $|rs|$; which is a contradiction. Thus $G_{max}(P)$ cannot have two disjoint edges. From this, it follows that $\e{1}{n}=n$.

Alon and Erd\H{o}s~\cite{Alon1989} proved that $\e{2}{n}<6n-5$, i.e., any geometric graph with $n$ vertices and at least $6n-5$ edges contains a plane matching of size three. This bound was improved to $\e{2}{n}\le 3n$ by Goddard et al.~\cite{Goddard1996}. Recently {\v{C}}ern{\'{y}}~\cite{Cerny2005} proved that $\e{2}{n}\le \lfloor 2.5 n\rfloor$; while the lower bound of $\e{2}{n}\ge \lceil2.5n\rceil-3$ is due to Perles (see \cite{Cerny2005}). For $\e{3}{n}$, Goddard et al.~\cite{Goddard1996} showed that $3.5n-6\le\e{3}{n}\le 10n$, which was improved by T{\'{o}}th and Valtr~\cite{Toth1999} to $4n-9\le\e{3}{n}\le 8.5n$.

For general values of $k$, Akiyama and Alon~\cite{Akiyama1989} gave the upper bound of $\e{k}{n}=O(n^{2-1/(k+1)})$. Goddard et al.~\cite{Goddard1996} improved the bound to $\e{k}{n}=O(n(\log n)^{k-3})$. Pach and T{\"{o}}r{\H{o}}csik~\cite{Pach1994} obtained the upper bound of $\e{k}{n} \le k^4n$; which is the first upper bound that is linear in $n$. The upper bound was improved to $k^3(n+1)$ by T{\'{o}}th and Valtr~\cite{Toth1999}; they also gave the lower bound of $\e{k}{n}\ge \frac{3}{2}(k-1)n-2k^2$. T{\'{o}}th~\cite{Toth2000} improved the upper bound to $\e{k}{n}\le 2^9k^2n$, where the constant has been improved to $2^8$ by Felsner~\cite{Felsner2004}. It is conjectured that $\e{k}{n}\le ckn$ for some constant $c$.

For the maximum value of $k$, i.e., $k=\frac{n}{2}-1$, with $n$ even, Aichholzer et al.~\cite{Aichholzer2010} showed that $\e{n/2-1}{n}={n \choose 2}-\frac{n}{2}=\frac{n(n-2)}{2}$. That is, by removing $\frac{n}{2}-1$ edges from any complete geometric graph, the resulting graph has $k+1=\frac{n}{2}$ disjoint edges, i.e., a plane perfect matching. This bound is tight; there exist complete geometric graphs, such that by removing $\frac{n}{2}$ edges, the resulting graph does not have any plane perfect matching. Similar bounds were obtained by Kupitz and Perles for complete convex graphs, i.e., complete graphs of point sets in convex position. 
Specifically, for convex geometric graphs, Kupitz and Perles showed that $\e{k}{n}=kn$; see \cite{Goddard1996} (see also \cite{Akiyama1989} and \cite{Alon1989}). In particular, in the convex case, $2n + 1$ edges guarantee a plane matching of size three. In addition, Keller and Perles~\cite{Keller2012} gave a characterization of all sets of $\frac{n}{2}$ edges whose removal prevents the resulting graph from having a plane perfect matching.

\v{C}ern{\'{y}} et al.~\cite{Cerny2007} considered the existence of Hamiltonian paths in geometric graphs. They showed that after removing at most $\sqrt{n}/(2\sqrt{2})$ edges from any complete geometric graph of $n$ vertices, the resulting graph still contains a plane Hamiltonian path. Aichholzer et al.~\cite{Aichholzer2010} obtained tight bounds on the maximum number of edges that can be removed from a complete geometric graph, such that the resulting graph contains a certain plane subgraph; they considered plane perfect matchings, plane subtrees of a given size, and triangulations. 
\subsubsection{Counting Plane Graphs}
The number of plane graphs of a given type in a set of $n$ points is also of interest. In 1980, Newborn and Moser~\cite{Newborn1980} asked for the maximal number of plane Hamiltonian cycles; they give an upper bound of $2\cdot6^{n-2}\lfloor\frac{n}{2}\rfloor!$, but conjecture that it should be of the form $c^n$, for some constant $c$. In 1982, Ajtai et al.~\cite{Ajtai1982} proved that the number of plane graphs is at most $10^{13n}$. Every plane graph is a subgraph of some triangulation (with at most $3n-6$ edges). Since a triangulation has at most $2^{3n-6}$ plane subgraphs, as noted in~\cite{Garcia2000}, any bound of $\alpha^n$ on the number of triangulations implies a bound of $2^{3n-6}\alpha^n<(8\alpha)^n$ on the number of plane graphs. The best known upper bound of $59^n$, for the number of triangulations is due to Santos and Seidel~\cite{Santos2003}. This implies the bound $472^n$ for plane graphs. As for plane perfect matchings, since a perfect matching has $\frac{n}{2}$ edges, Dumitrescu~\cite{Dumitrescu1999} obtained an upper bound of ${{3n-6} \choose {n/2}}\alpha^n\le (3.87\alpha)^n$, where $\alpha=59$. Recently, Sharir and Welzl~\cite{Sharir2006} showed that the number of plane perfect matchings is at most $O(10.05^n)$. They also showed that the number of all (not necessarily perfect) plane matchings is at most $O(10.43^n)$. 

Garc{\'{\i}}a et al.~\cite{Garcia2000} showed that the number of plane perfect matchings of a fixed size set of points in the plane is minimum when the points are in convex position. Motzkin~\cite{Motzkin1948} showed that points in convex position have $C_{n/2}$ many perfect matchings (classically referred to as non-crossing configurations of chords on a circle), where $C_{n/2}$ is the $(n/2)^{th}$ {\em Catalan number}; $C_{n/2}=\Theta(n^{-3/2}2^n)$. Thus, the number of plane perfect matchings of $n$ points in the plane is at least $C_{n/2}$. Garc{\'{\i}}a et al.~\cite{Garcia2000} presented a configuration of $n$ points in the plane which has $\Omega(n^{-4}3^n)$ many plane perfect matchings. See Table~\ref{table1}.

\subsubsection{Counting Edge-Disjoint Plane Graphs}

The number of edge-disjoint plane graphs of a given type in a point set $P$ of $n$ points is also of interest. Nash-Williams~\cite{Nash-Williams1961} and Tutte~\cite{Tutte1961} independently considered the number of (not necessarily plane) spanning trees. They obtained necessary and sufficient conditions for a graph to have $k$ edge-disjoint spanning trees. Kundu~\cite{Kundu1974} showed that any $k$-edge-connected graph contains at least $\lceil\frac{k-1}{2}\rceil$ edge-disjoint spanning trees. 

As for the plane spanning trees a long-standing open question is to determine if the edges of $\Kn{(P)}$ (where $n$ is even) can be partitioned into $\frac{n}{2}$ plane spanning trees. In other words, is it possible to pack $\frac{n}{2}$ plane spanning trees into $\Kn{(P)}$? If $P$ is in convex position, the answer in the affirmative follows from the result of Bernhart and Kanien~\cite{Bernhart1979}. In \cite{Bose2006}, the authors characterize the partitions of the complete convex graph into plane spanning trees. They also describe a sufficient condition, which generalizes the convex case, for points in general position. Aichholzer et al.~\cite{Aichholzer2014} showed that if the convex hull of $P$ contains $h$ vertices, then $\Kn{(P)}$ contains at least $\lfloor\frac{h}{2}\rfloor$ edge-disjoint plane spanning trees, and if $P$ is in a ``regular wheel configuration'', $\Kn{(P)}$ can be partitioned into $\frac{n}{2}$ spanning trees. For $P$ in general position they showed that $\Kn{(P)}$ contains $\Omega(\sqrt{n})$ edge-disjoint plane spanning trees. They obtained the following trade-off between the number of edge-disjoint plane spanning trees and the maximum vertex degree in each tree: For any $k\le \sqrt{n/12}$, $\Kn{(P)}$ has $k$ edge-disjoint plane spanning trees with maximum vertex degree $O(k^2)$ and diameter $O(\log(n/k^2))$. They also showed the existence of at least 2 edge-disjoint plane Hamiltonian paths. 

\subsection{Our Results}
\label{our-results-section}

Given a set $P$ of $n$ points in the plane, with $n$ even, we consider the problem of packing plane perfect matchings into $\Kn{(P)}$. 
From now on, a {\em matching} will be a {\em perfect matching}. 

In Section~\ref{edge-disjoint-plane-section} we prove bounds on the number of plane matchings that can be packed into $\Kn{(P)}$. 
In Section~\ref{convex-position-section} we show that if $P$ is in convex position, then $\frac{n}{2}$ plane matchings can be packed into $\Kn{(P)}$; this bound is tight. 

The points in wheel configurations are considered in Section~\ref{wheel-section}. We show that if $P$ is in regular wheel configuration, then $\frac{n}{2}-1$ edge-disjoint plane matchings can be packed into $\Kn{(P)}$; this bound is tight as well. In addition, for a fixed size set of points, we give a wheel configuration of the points which contains at most $\lceil\frac{n}{3}\rceil$ edge-disjoint plane matchings. 

Point sets in general position are considered in Section~\ref{general-position-section}. We show how to find three edge-disjoint plane matchings in any set of at least 8 points. If $n$ is a power of two, we prove that $\Kn{(P)}$ contains at least $\log_2n$ many edge-disjoint plane matchings. For the general case, where $n$ is an even number, we prove that $\Kn{(P)}$ contains at least $\lceil\log_2n\rceil-2$ edge-disjoint plane matchings. 

In Section~\ref{non-crossing-matching-section} we count the number of pairwise non-crossing plane matchings. Two plane matchings $M_1$ and $M_2$ are called {\em non-crossing} if the edges of $M_1$ and $M_2$ do not cross each other. We show that $\Kn{(P)}$ contains at least two and at most five non-crossing plane matchings; these bounds are tight. Table~\ref{table1} summarizes the results. 

In Section~\ref{persistency-section} we define the concept of {\em matching persistency} in a graph. A graph $G$ is called {\em matching-persistent}, if by removing any perfect matching $M$ from $G$, the resulting graph, $G-M$, still contains a perfect matching. We define the {\em plane matching persistency} of a point set $P$, denoted by $\pmp{P}$, to be the smallest number of edge-disjoint plane matchings such that, if we remove them from $\Kn{(P)}$ the resulting graph does not have any plane perfect matching. In other words, $\pmp{P}=|\mathcal{M}|$, where $\mathcal{M}$ is the smallest set of edge-disjoint plane matchings such that $\Kn{(P)}-\bigcup_{M\in\mathcal{M}}{M}$ does not have any plane perfect matching. Here, the challenge is to find point sets with high plane matching persistency. We show that $\pmp{P}\ge 2$ for all point sets $P$. We give a configuration of $P$ with $\pmp{P}\ge 3$. 
Concluding remarks and open problems are presented in Section~\ref{conclusion}.

\begin{table}
\caption{Number of plane perfect matchings in a point set $P$ of $n$ points ($n$ is even).}
\label{table1}
\centering
    \begin{tabular}{|l||@{\hskip 0.15in}c@{\hskip 0.15in}|@{\hskip 0.15in}c@{\hskip 0.15in}|@{\hskip 0.15in}c@{\hskip 0.15in}|@{\hskip 0.15in}c@{\hskip 0.15in}|}
         \hline
             Matching 	& $\forall P: \ge$ &$\exists P:\le$&$\exists P: \ge$&$\forall P:\le$  \\ \hline\hline
             total& 	$2^n$\cite{Garcia2000, Motzkin1948}&$2^n$\cite{Motzkin1948}&$3^n$\cite{Garcia2000}& $O(10.05^n)$\cite{Sharir2006}\\\hline\hline
		 edge-disjoint&$\lceil\log_2{n}\rceil-2$ &$\lceil \frac{n}{3}\rceil$& $\frac{n}{2}$&$n-1$\\
             non-crossing edge-disjoint& 2 & 2&5&5 \\
         \hline
    \end{tabular}
\end{table}

\section{Preliminaries}
\label{preliminaries}

\subsection{Graph-Theoretical Background}
\label{graph-background-section}
Consider a graph $G=(V,E)$ with vertex set $V$ and edge set $E$. If $G$ is a complete graph on a vertex set $V$ of size $n$, then $G$ is denoted by $K_n$.
A $k${\em -factor} is a regular graph of degree $k$. If $G$ is the union of pairwise edge-disjoint $k$-factors, their union is called a $k${\em-factorization} and $G$ itself is $k${\em-factorable}~\cite{Harary1991}. A {\em matching} in a graph $G$ is a set of edges that do not share vertices. A {\em perfect matching} is a matching which matches all the vertices of $G$. Since a perfect matching is a regular graph of degree one, it is a $1$-factor. It is well-known that for $n$ even, the complete graph $K_n$ is 1-factorable (See~\cite{Harary1991}).
Note that $K_n$ has $\frac{n(n-1)}{2}$ edges and every $1$-factor has $\frac{n}{2}$ edges. Thus, $K_n$ can be partitioned into at most $n-1$ edge-disjoint perfect matchings.

On the other hand it is well-known that the edges of a complete graph $K_n$, where $n$ is even, can be colored by $n − 1$ colors such that any two disjoint edges have a different color. Each color is assigned to $\frac{n}{2}$ edges, so that each color defines a $1$-factor. The following geometric construction of a coloring, which uses a ``regular wheel configuration'', is provided in \cite{Soifer2009}. In a regular wheel configuration, $n − 1$ regularly spaced points are placed on a circle and one point is placed at the center of the circle. For each color class, include an edge $e$ from the center to one of the boundary vertices, and all of the edges perpendicular to the line through $e$, connecting pairs of boundary vertices.

The number of perfect matchings in a complete graph $K_n$ (with $n$ even), denoted by $M(n)$, is given by the double factorial; $M(n)=(n − 1)!!$ \cite{Callan2009}, where $(n-1)!!=\allowbreak (n-1)\times\allowbreak  (n-3)\times\allowbreak \dots\times 3\times 1$. We give the following recurrence for the number of perfect matchings in $K_n$. Consider an edge $e\in K_n$. The number of perfect matchings in $K_n$ is the number of perfect matchings containing $e$, plus the number of perfect matchings which do not contain $e$. One can easily derive the following recurrence for $M(n)$: 
\[ M(n) = \left\{
  \begin{array}{l l}
    1 & \quad \text{if $n=2$}\\
    3 & \quad \text{if $n=4$}\\
  M(n-2)+2{n-2 \choose 2}M(n-4) & \quad \text{if $n\ge 6$}
  \end{array} \right.\]

\subsection{Plane Matchings in Colored Point Sets}
\label{colored-matching-section}
Let $P$ be a set of $n$ colored points in general position in the plane with $n$ even. A {\em colored matching} of $P$, is a perfect matching such that every edge connects two points of distinct colors. A {\em plane colored matching} is a colored matching which is non-crossing.  
A special case of a plane colored matching, where $P$ is partitioned into a set $R$ of $\frac{n}{2}$ red points and a set $B$ of $\frac{n}{2}$ blue points, is called {\em plane bichromatic matching}, also known as {\em red-blue matching} ($RB${\em-matching}). In other words, an $RB$-matching of $P$ is a non-crossing perfect matching such that every edge connects a red point to a blue point. It is well-known that if no three points of $P$ are collinear, then $P$ has an $RB$-matching~\cite{Putnam1979}. As shown in Figure~\ref{RB-fig}(a), some point sets have a unique $RB$-matching. Hershberger and Suri~\cite{Hershberger1992} construct an $RB$-matching in $O(n\log n)$ time, which is optimal.

\begin{figure}[htb]
  \centering
\setlength{\tabcolsep}{0in}
  $\begin{tabular}{cc}
\multicolumn{1}{m{.5\columnwidth}}{\centering\includegraphics[width=.33\columnwidth]{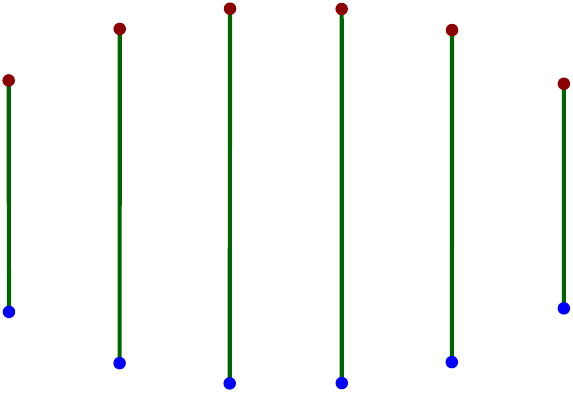}}
&\multicolumn{1}{m{.5\columnwidth}}{\centering\includegraphics[width=.35\columnwidth]{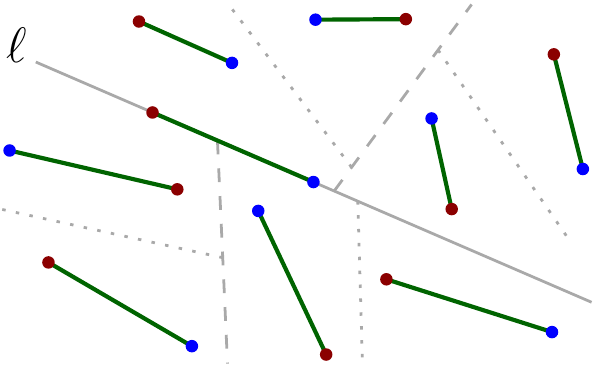}} \\
(a) & (b)
\end{tabular}$
  \caption{(a) A point set with a unique $RB$-matching, (b) Recursive ham sandwich cuts: first cut is in solid, second-level cuts are in dashed, and third-level cuts are in dotted lines.}
\label{RB-fig}
\end{figure}

We review some proofs for the existence of a plane perfect matching between $R$ and $B$:
\begin{itemize}
\item {\Min{R}{B}:} Consider a matching $M$ between $R$ and $B$ which minimizes the total Euclidean length of the edges. The matching $M$ is plane. To prove this, suppose that two line segments $r_1b_1$ and $r_2b_2$ in $M$ intersect. By the triangle inequality, $|r_1b_2|+|r_2b_1|<|r_1b_1|+|r_2b_2|$. This implies that by replacing $r_1b_1$ and $r_2b_2$ in $M$ by $r_1b_2$ and $r_2b_1$, the total length of the matching is decreased; which is a contradiction.

\item {\Cut{R}{B}:} The {\em ham sandwich theorem} implies that there is a line $\ell$, known as a {\em ham sandwich cut}, that splits both $R$ and $B$ exactly in half; if the size of $R$ and $B$ is odd, the line passes through one of each. Match the two points on $\ell$ (if there are any) and recursively solve the problem on both sides of $\ell$; the recursion stops when each subset has one red point and one blue point. By matching these two points in all subsets, a plane perfect matching for $P$ is obtained. See Figure~\ref{RB-fig}(b). A ham sandwich cut can be computed in $O(n)$ time \cite{Lo1994}, and hence the running time can be expressed as the recurrence $T(n)=O(n)+2\cdot T(\lfloor\frac{n}{2}\rfloor)$. Therefore, an $RB$-matching can be computed in $O(n\log n)$ time. 

\item {\Tangent{R}{B}:} If $R$ and $B$ are separated by a line, we can compute an $RB$-matching in the following way. W.l.o.g. assume that $R$ and $B$ are separated by a vertical line $\ell$. Let \CH{R} and \CH{B} denote the convex hulls of $R$ and $B$. Compute the upper tangent $rb$ of \CH{R} and \CH{B} where $r\in R$ and $b\in B$. Match $r$ and $b$, and recursively solve the problem for $R-\{r\}$ and $B-\{b\}$; the recursion stops when the two subsets are empty. In each iteration, all the remaining points are below the line passing through $r$ and $b$, thus, the line segments representing a matched pair in the successor iterations do not cross $rb$. Therefore, the resulting matching is plane.
\end{itemize}

Consider a set $P$ of $n$ points where $n$ is even, and a partition $\{P_1,\dots,P_k\}$ of $P$ into $k$ color classes. Sufficient and necessary conditions for the existence of a colored matching in $P$ follows from the following theorem by Sitton~\cite{Sitton1996}:

\begin{theorem}[Sitton~\cite{Sitton1996}]
\label{Sitton}
Let $K_{n_1,\dots,n_k}$ be a complete multipartite graph with $n$ vertices, where $n_1\le\dots\le n_k$. If $n_k\allowbreak \le \allowbreak n_1+\dots+\allowbreak n_{k-1}$, then $K_{n_1,\dots,n_k}$ has a matching of size $\lfloor\frac{n}{2}\rfloor$. 
\end{theorem}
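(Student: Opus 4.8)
The hypothesis $n_k\le n_1+\dots+n_{k-1}$ is equivalent to $2n_k\le n$, i.e.\ no color class contains more than half of the vertices, and the conclusion asks for a near-perfect matching (size $\lfloor n/2\rfloor$, the largest a matching can be). Since $K_{n_1,\dots,n_k}$ has an edge between every pair of vertices lying in \emph{different} classes, the only obstruction to matching two vertices is that they belong to the same class. The plan is therefore to pair up the vertices so that no pair is monochromatic, using a simple ``antipodal'' offset.

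First assume $n$ is even. I would lay the $n$ vertices out in a single sequence $v_1,\dots,v_n$ in which all vertices of the same class are consecutive, so each class occupies a contiguous block of length $n_i\le n/2$. I then match $v_i$ with $v_{i+n/2}$ for $i=1,\dots,n/2$. The key observation is that $v_i$ and $v_{i+n/2}$ can never lie in the same block: a block containing two positions exactly $n/2$ apart would need length at least $n/2+1$, contradicting $n_i\le n/2$. Hence every one of the $n/2$ pairs joins two distinct classes and is an edge of $K_{n_1,\dots,n_k}$, giving a perfect matching.

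For odd $n$ I would first delete a single vertex from the largest class $V_k$, leaving $n-1$ (even) vertices with class sizes $n_1,\dots,n_{k-1},n_k-1$. The largest of these is at most $n_k\le\lfloor n/2\rfloor=(n-1)/2$, so the even case applies to the remaining $n-1$ points and yields a matching of size $(n-1)/2=\lfloor n/2\rfloor$. (If $k=1$ the hypothesis forces $n_1\le 0$, so $k\ge 2$ throughout and the construction is never vacuous.)

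An alternative, more structural route is the Tutte--Berge formula: the maximum matching of $G$ has size $\tfrac12\bigl(n-\max_{S\subseteq V}(o(G-S)-|S|)\bigr)$, where $o(\cdot)$ counts odd components. The computation is easy here because a complete multipartite graph on at least two nonempty classes is connected, so $G-S$ is either connected (whence $o(G-S)\le 1$) or, when $S$ deletes all but one class, a set of isolated vertices. The extreme case is $S=V\setminus V_k$, which gives $o(G-S)-|S|=2n_k-n\le 0$; hence the overall maximum deficiency is just $n\bmod 2$ and the matching number equals $\lfloor n/2\rfloor$. I expect the only delicate points in either approach to be the parity bookkeeping for odd $n$ and the boundary case where a class has size exactly $n/2$; both are settled by the length argument above, so I anticipate no genuine obstacle.
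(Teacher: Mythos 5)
Your proposal is correct. Note, however, that the paper itself offers no proof of this statement: it is quoted verbatim as a known result of Sitton, with only a citation. So there is nothing in the paper to compare against line by line; what you have produced is a self-contained proof that the paper omits. Both of your routes check out. The ``antipodal'' pairing is the standard folklore argument: with the classes laid out as contiguous blocks, two positions $n/2$ apart can share a block only if that block has length at least $n/2+1$, which the hypothesis $2n_k\le n$ forbids, and your reduction of the odd case (delete one vertex of $V_k$ and observe that $n_k\le\lfloor n/2\rfloor=(n-1)/2$ keeps the reduced instance balanced) is handled cleanly. The Tutte--Berge computation is also sound: for any $S$ leaving at least two nonempty classes the graph $G-S$ is connected so the deficiency $o(G-S)-|S|$ is at most $1$, and in the degenerate case $S\supseteq V\setminus V_j$ it is at most $2n_j-n\le 0$; combined with the parity constraint $o(G-S)-|S|\equiv n\pmod 2$ this pins the maximum deficiency at $n\bmod 2$ and hence the matching number at $\lfloor n/2\rfloor$. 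The first argument is constructive and elementary (which is what the paper actually needs, since it later builds explicit matchings), while the second generalizes more readily; either would serve as a valid replacement for the citation.
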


Aichholzer et al.~\cite{Aichholzer2010} showed that if $K_{n_1,\dots,n_k}$ is a geometric graph corresponding to a colored point set $P$, then the minimum-weight colored matching of $P$ is non-crossing. Specifically, they extend the proof of 2-colored point sets to multi-colored point sets:

\begin{theorem}[Aichholzer et al.~\cite{Aichholzer2010}]
\label{Aichholzer}
Let $P$ be a set of colored points in general position in the plane with $|P|$ even. Then $P$
admits a non-crossing perfect matching such that every edge connects two points of distinct colors if and only if at most half the points in $P$ have the same color.
\end{theorem}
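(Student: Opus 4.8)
The plan is to prove both directions of the equivalence, with the forward direction being a short counting argument and the reverse direction extending the minimum-weight idea from \Min{R}{B} above to an arbitrary number of colors.

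For the necessity direction ($\Rightarrow$), suppose $P$ admits a (non-crossing, or indeed any) perfect matching in which every edge joins two distinct colors. Each of the $n/2$ edges has at most one endpoint of any fixed color, so at most $n/2$ points can carry that color; hence no color occurs more than $|P|/2$ times. This direction uses neither planarity nor general position.

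For the sufficiency direction ($\Leftarrow$), assume that at most half the points share a color. Writing the color-class sizes as $n_1\le\dots\le n_k$, the hypothesis $n_k\le n/2$ is exactly the condition $n_k\le n_1+\dots+n_{k-1}$, so Theorem~\ref{Sitton} guarantees that a colored perfect matching of $P$ exists. Among the finitely many colored perfect matchings, fix one, $M$, of minimum total Euclidean edge length; I claim $M$ is plane. Suppose instead that two edges $p_1q_1$ and $p_2q_2$ of $M$ cross. By general position the four endpoints are in strictly convex position with $p_1q_1$ and $p_2q_2$ as the diagonals; listing them in convex order as $p_1,p_2,q_1,q_2$, either pair of opposite sides, $\{p_1p_2,\,q_1q_2\}$ or $\{p_2q_1,\,q_2p_1\}$, yields another matching on these four points, and in both cases the triangle inequality gives a strict decrease in total length, exactly as in \Min{R}{B}. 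The new ingredient is to verify that at least one of these two uncrossings is still \emph{colored}. Writing $a=c(p_1)$, $b=c(q_1)$, $c=c(p_2)$, $d=c(q_2)$, the edges of $M$ being colored gives $a\ne b$ and $c\ne d$. If the first uncrossing failed to be colored then $a=c$ or $b=d$, and if the second failed then $b=c$ or $a=d$; a short case analysis over the four combinations forces either $a=b$ or $c=d$, a contradiction. Hence some uncrossing keeps every edge bichromatic while strictly shortening $M$, contradicting its minimality, so $M$ has no crossing and the proof is complete.

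The main obstacle is precisely this four-color case check. Unlike the two-color setting of \Min{R}{B}, where swapping endpoints automatically preserves the red--blue property, a single uncrossing here might create a monochromatic edge, and the crux is to argue that the two candidate uncrossings cannot both fail the coloring condition simultaneously.
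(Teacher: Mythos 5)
Your proof is correct, and it follows the same route the paper attributes to Aichholzer et al.: existence of a colored matching via Theorem~\ref{Sitton}, and planarity by showing the minimum-weight colored matching is non-crossing, extending the two-color uncrossing argument of \Min{R}{B}. Your four-color case check is the right key step and does close: if both uncrossings failed, each of the four combinations forces $a=b$ or $c=d$, contradicting that $M$ is colored.
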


\section{Packing Plane Matchings into Point Sets}
\label{edge-disjoint-plane-section}
Let $P$ be a set of $n$ points in the plane with $n$ even. In this section we prove lower bounds on the number of plane matchings that can be packed into $\Kn{(P)}$. It is obvious that every point set has at least one plane matching, because a minimum weight perfect matching in $\Kn{(P)}$, denoted by ${\sf Min}(P)$, is plane. A trivial lower bound of 2 (for $n\ge4$) is obtained from a minimum weight Hamiltonian cycle in $\Kn{(P)}$, because this cycle is plane and consists of two edge-disjoint matchings. We consider points in convex position (Section~\ref{convex-position-section}), wheel configuration (Section~\ref{wheel-section}), and general position (Section~\ref{general-position-section}).

\subsection{Points in Convex Position}
\label{convex-position-section}
In this section we consider points in convex position. We show that if $P$ is in convex position, $\frac{n}{2}$ plane matchings can be packed into $\Kn{(P)}$; this bound is tight.
\begin{lemma}
\label{two-convex-edges}
 If $P$ is in convex position, where $|P|$ is even and $|P|\ge4$, then every plane matching in $P$ contains at least two edges of \CH{P}.
\end{lemma}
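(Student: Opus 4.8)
The plan is to reduce the cyclic statement to a one-dimensional fact about non-crossing matchings on a linearly ordered block of points, and then to exploit the non-crossing property to split the polygon with a single matching edge. Label the points $p_1,\dots,p_n$ in the cyclic order in which they appear on \CH{P}, so that the edges of \CH{P} are exactly the $n$ pairs $\{p_i,p_{i+1}\}$ of cyclically consecutive vertices (indices mod $n$). Fix a plane (non-crossing) perfect matching $M$. The single observation driving everything is that for any edge $\{p_a,p_b\}\in M$ the chord $p_ap_b$ separates the remaining points into the two arcs lying on either side of it, and since $M$ is non-crossing no edge of $M$ may join the two arcs; hence each arc is matched entirely within itself and in particular contains an even number of points.

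First I would establish the auxiliary \emph{linear} claim: any non-crossing perfect matching of a nonempty block of consecutive hull vertices $p_c,p_{c+1},\dots,p_d$ (matched among themselves) uses at least one edge joining two \emph{consecutive} vertices, i.e. at least one edge of \CH{P}. To see this, among all matched pairs in the block pick one, say $\{p_i,p_j\}$ with $c\le i<j\le d$, that minimises the gap $j-i$. The points strictly between $p_i$ and $p_j$ must be matched among themselves by the observation above; if any such point existed, it would be matched to another point in the same range, producing a matched pair with strictly smaller gap and contradicting minimality. Hence $j=i+1$, so $\{p_i,p_{i+1}\}$ is an edge of both $M$ and \CH{P}.

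Finally I would pick an arbitrary edge $e=\{p_a,p_b\}\in M$ and split on whether one of its two arcs is empty. If both arcs are nonempty, each is a block of consecutive vertices matched within itself, so the linear claim yields a hull edge inside each arc; these two hull edges lie in disjoint arcs and are therefore distinct, giving the required two. If one arc is empty, then $p_a$ and $p_b$ are consecutive, so $e$ itself is a hull edge; the other arc then consists of all remaining $n-2\ge 2$ points and, again by the linear claim, contributes a second hull edge, which is distinct from $e$ since its endpoints lie strictly inside the arc. Either way $M$ contains at least two edges of \CH{P}.

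The routine points to get right are the parity and containment bookkeeping (every arc is even and internally matched, which is exactly what the non-crossing observation provides) and the boundary case where the chosen edge is itself a hull edge. The one genuine subtlety, which the case split is designed to handle cleanly, is ensuring that the two hull edges produced are distinct rather than accidentally the same edge; this is guaranteed because they are forced to live in disjoint arcs of the polygon.
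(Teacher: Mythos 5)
Your proof is correct. It rests on the same geometric engine as the paper's --- a non-crossing matching edge is a chord that separates the polygon into two arcs, each of which must be matched within itself --- but the way you extract hull edges from that fact is genuinely different. The paper proceeds by induction on $|P|$: if $M$ contains a diagonal $pq$, it splits $P$ along $pq$ into two convex sub-polygons, applies the inductive hypothesis to get two hull edges of each sub-polygon, and then observes that at most one of those can be the shared chord $pq$, leaving at least one genuine edge of \CH{P} on each side. You instead prove a standalone ``linear'' lemma by an extremal argument --- a matched pair of minimal cyclic gap inside a consecutive block must have gap one, since anything strictly between them would be matched internally and yield a smaller gap --- and then apply it once to each arc of an arbitrary edge of $M$. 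Your route avoids induction on the point set entirely and sidesteps the slightly delicate bookkeeping in the paper's inductive step about which edges of $\CH{P_1}\cup\CH{P_2}$ are actually edges of \CH{P}; the paper's version, on the other hand, delivers the stronger intermediate statement (two hull edges of every convex sub-polygon) for free at every level of the recursion, which is the form in which the lemma gets reused later (e.g.\ in Lemma~\ref{two-wheel-edges} and Lemma~\ref{2-non-crossing}). Your case split on whether an arc of the chosen edge is empty correctly handles the distinctness of the two hull edges produced, which is the one place such an argument could silently fail.
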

\begin{proof}
Let $M$ be a plane matching in $P$. We prove this lemma by induction on the size of $P$. If $|P|=4$, then $|M|=2$. None of the diagonals of $P$ can be in $M$, thus, the two edges in $M$ belong to \CH{P}. If $|P|>4$ then $|M|\ge 3$. If all edges of $M$ are edges of \CH{P}, then the claim in the lemma holds. Assume that $M$ contains a diagonal edge $pq\notin \CH{P}$. Let $P_1$ and $P_2$ be the sets of points of $P$ on each side of $\ell(p,q)$ (both including $p$ and $q$). Let $M_1$ and $M_2$ be the edges of $M$ in $P_1$ and $P_2$, respectively. It is obvious that $P_1$ (resp. $P_2$) is in convex position and $M_1$ (resp. $M_2$) is a plane matching in $P_1$ (resp. $P_2$). By the induction hypothesis $M_1$ (resp. $M_2$) contains two edges of \CH{P_1} (resp. \CH{P_2}). Since $\CH{P}=\CH{P_1}\cup \CH{P_2}$ and $|M_1\cap M_2|=1$, $M$ contains at least two edges of \CH{P}.
\end{proof}

\begin{theorem}
\label{convex}
For any set $P$ of $n$ points in convex position in the plane, with $n$ even, the maximum number of plane matchings that can be packed into $\Kn{(P)}$ is $\frac{n}{2}$.
\end{theorem}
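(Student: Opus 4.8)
My plan is to establish the two bounds separately; the upper bound is essentially a corollary of Lemma~\ref{two-convex-edges}, while the lower bound needs an explicit packing.

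For the upper bound, suppose $m$ pairwise edge-disjoint plane matchings $M_1,\dots,M_m$ are packed into $\Kn{(P)}$. Since $P$ is in convex position, all $n$ points are vertices of $\CH{P}$, so $\CH{P}$ has exactly $n$ edges. Lemma~\ref{two-convex-edges} guarantees that each $M_i$ uses at least two hull edges, and edge-disjointness means no hull edge is shared between two of them. Counting the hull edges consumed then yields $2m \le n$, i.e. $m \le \frac{n}{2}$.

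For the lower bound, I would construct $\frac{n}{2}$ edge-disjoint plane matchings directly. Label the points $p_0,\dots,p_{n-1}$ in their cyclic order around $\CH{P}$, and for each odd residue $c \in \{1,3,\dots,n-1\}$ set $M_c = \{\, p_i p_j : i+j \equiv c \pmod n \,\}$. Because $n$ is even and $c$ is odd, the congruence $2i \equiv c \pmod n$ is unsolvable, so the map $i \mapsto (c-i)\bmod n$ is a fixed-point-free involution and $M_c$ is a perfect matching; there are exactly $\frac{n}{2}$ such residues $c$. Edge-disjointness is immediate, since the edge $p_ip_j$ forces $c \equiv i+j \pmod n$, so each edge lies in at most one $M_c$.

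It remains to check that every $M_c$ is non-crossing, and this is where I expect the main subtlety to lie. The crucial point is that whether two chords of a convex polygon cross is determined solely by the cyclic order of their endpoints, not by the actual coordinates; hence it suffices to verify the non-crossing property for a single convenient realization. Placing $p_i$ at angle $2\pi i/n$ on a circle (the regular $n$-gon), the perpendicular bisector of every chord of $M_c$ meets the center at the common angle $\pi c/n$, so all $\frac{n}{2}$ chords of $M_c$ are mutually parallel and therefore pairwise disjoint. By the combinatorial invariance of crossing, $M_c$ is non-crossing for every convex point set carrying this labeling, which completes the construction. The counting for the upper bound and the verification that each $M_c$ is a perfect matching are routine; the one step I would treat with care is precisely this reduction from an arbitrary convex configuration to the symmetric regular one.
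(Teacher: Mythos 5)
Your proposal is correct and follows essentially the same route as the paper: the upper bound is the identical hull-edge count via Lemma~\ref{two-convex-edges}, and your matchings $M_c$ (edges $p_ip_j$ with $i+j\equiv c \pmod n$, $c$ odd) are exactly the paper's parallel classes $M_i$ under the substitution $c=2i-1$. The only difference is that you spell out the planarity verification (reduction to the regular $n$-gon via combinatorial invariance of chord crossings), which the paper leaves as an informal remark.
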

\begin{proof}
By Lemma~\ref{two-convex-edges}, every plane matching in $P$ contains at least two edges of \CH{P}. On the other hand, \CH{P} has $n$ edges. Therefore, the number of plane matchings that can be packed into $\Kn{(P)}$ is at most $\frac{n}{2}$.

Now we show how to pack $\frac{n}{2}$ plane matchings into $\Kn{(P)}$.
Let $P=\{p_0,\dots,p_{n-1}\}$, and w.l.o.g. assume that $p_0,p_1,\dots,p_{n-1}$ is the radial ordering of the points in $P$ with respect to a fixed point in the interior of \CH{P}. For each $p_i$ in the radial ordering, where $0\le i<\frac{n}{2}$, let $M_i=\{p_{i+j-1}p_{n+i-j}:j=1,\dots,\frac{n}{2}\}$ (all indices are modulo $n$). Informally speaking, $M_i$ is a plane perfect matching obtained from edge $p_ip_{i-1}$ and all edges parallel to $p_ip_{i-1}$; see Figure~\ref{convex-fig}. Let $\mathcal{M}=\{M_i:i=0,\dots,\frac{n}{2}-1\}$. The matchings in $\mathcal{M}$ are plane and pairwise edge-disjoint. Thus, $\mathcal{M}$ is a set of $\frac{n}{2}$ plane matchings that can be packed into $\Kn{(P)}$.  
\end{proof}

\begin{figure}[htb]
  \centering
\setlength{\tabcolsep}{0in}
\includegraphics[width=.32\columnwidth]{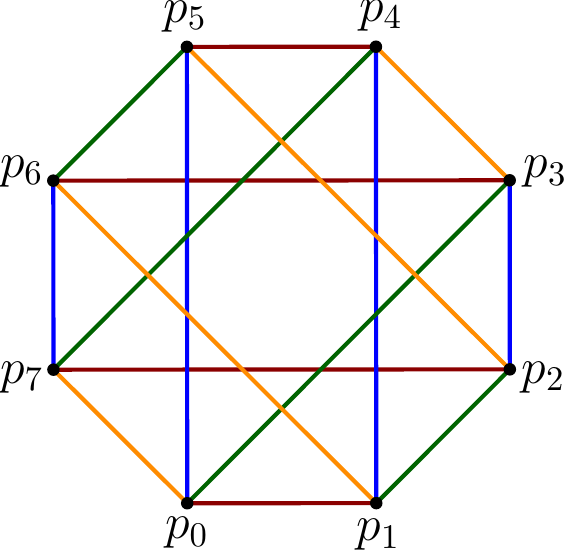}
  \caption{Points in convex position.}
\label{convex-fig}
\end{figure}

\subsection{Points in Wheel Configurations}
\label{wheel-section}
A point set $P$ of $n$ points is said to be in ``regular wheel configuration'' in the plane, if $n-1$ points of $P$ are regularly spaced on a circle $C$ and one point of $P$ is at the center of $C$.
We introduce a variation of the regular wheel configuration as follows. 
Let the point set $P$ be partitioned into $X$ and $Y$ such that $|X|\ge 3$ and $|X|$ is an odd number. The points in $X$ are regularly spaced on a circle $C$. For any two distinct points $p, q\in X$ let $\ell(p,q)$ be the line passing through $p$ and $q$. Since $X$ is regularly spaced on $C$ and $|X|$ is an odd number, $\ell(p,q)$ does not contain the center of $C$. Let $H(p,q)$ and $H'(p,q)$ be the two half planes defined by $\ell(p,q)$ such that $H'(p,q)$ contains the center of $C$. Let $C'=\bigcap_{p,q\in X}H'(p,q)$. The points in $Y$ are in the interior of $C'$; see Figure~\ref{n-over-3-fig}(a). Note that for any two points $p,q\in X$, the line segment $pq$ does not intersect the interior of $C'$. The special case when $|Y|=1$ is the regular wheel configuration.

\begin{figure}[htb]
  \centering
\setlength{\tabcolsep}{0in}
  $\begin{tabular}{ccc}
\multicolumn{1}{m{.5\columnwidth}}{\centering\includegraphics[width=.4\columnwidth]{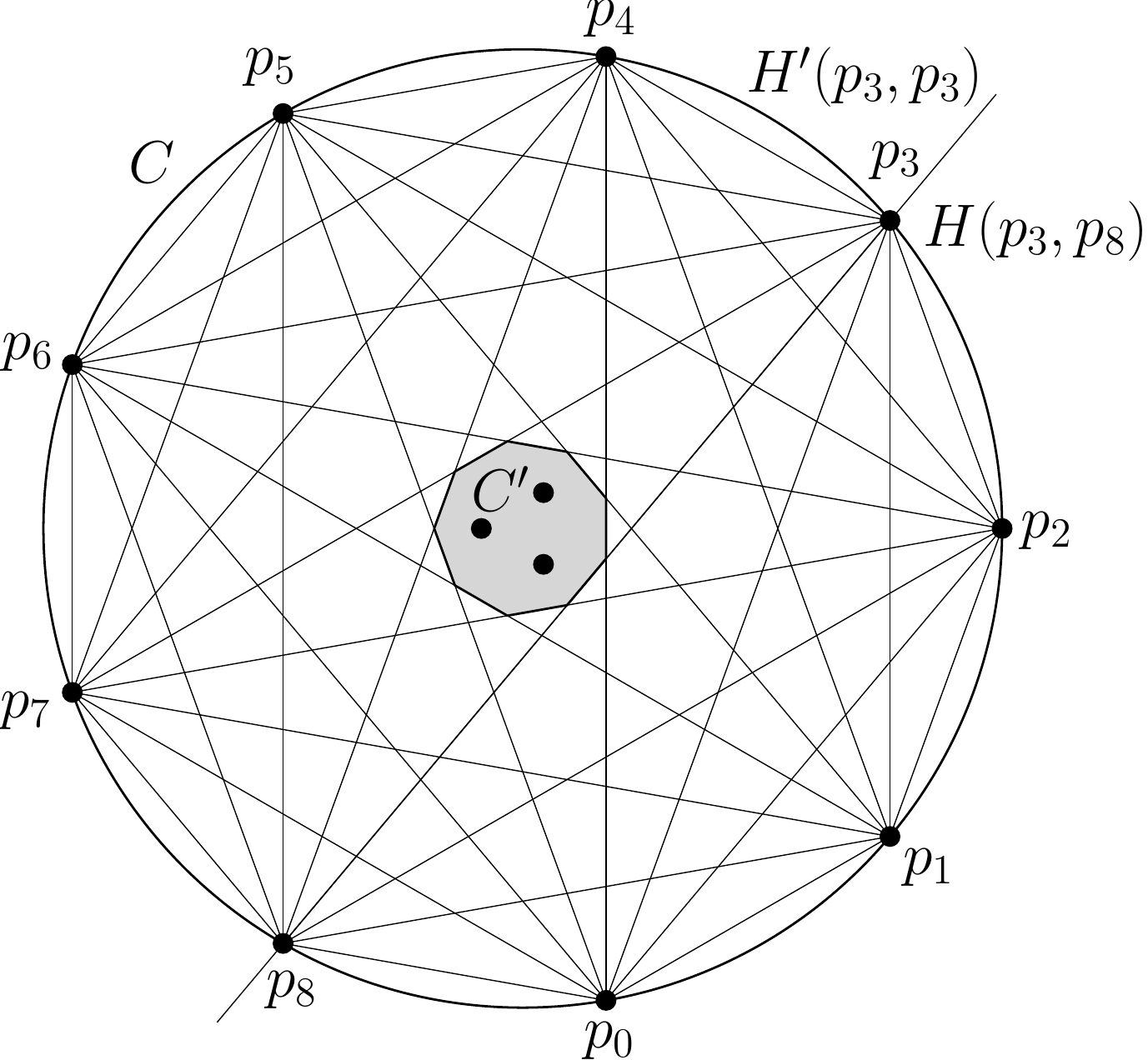}}&
\multicolumn{1}{m{.5\columnwidth}}{\centering\includegraphics[width=.4\columnwidth]{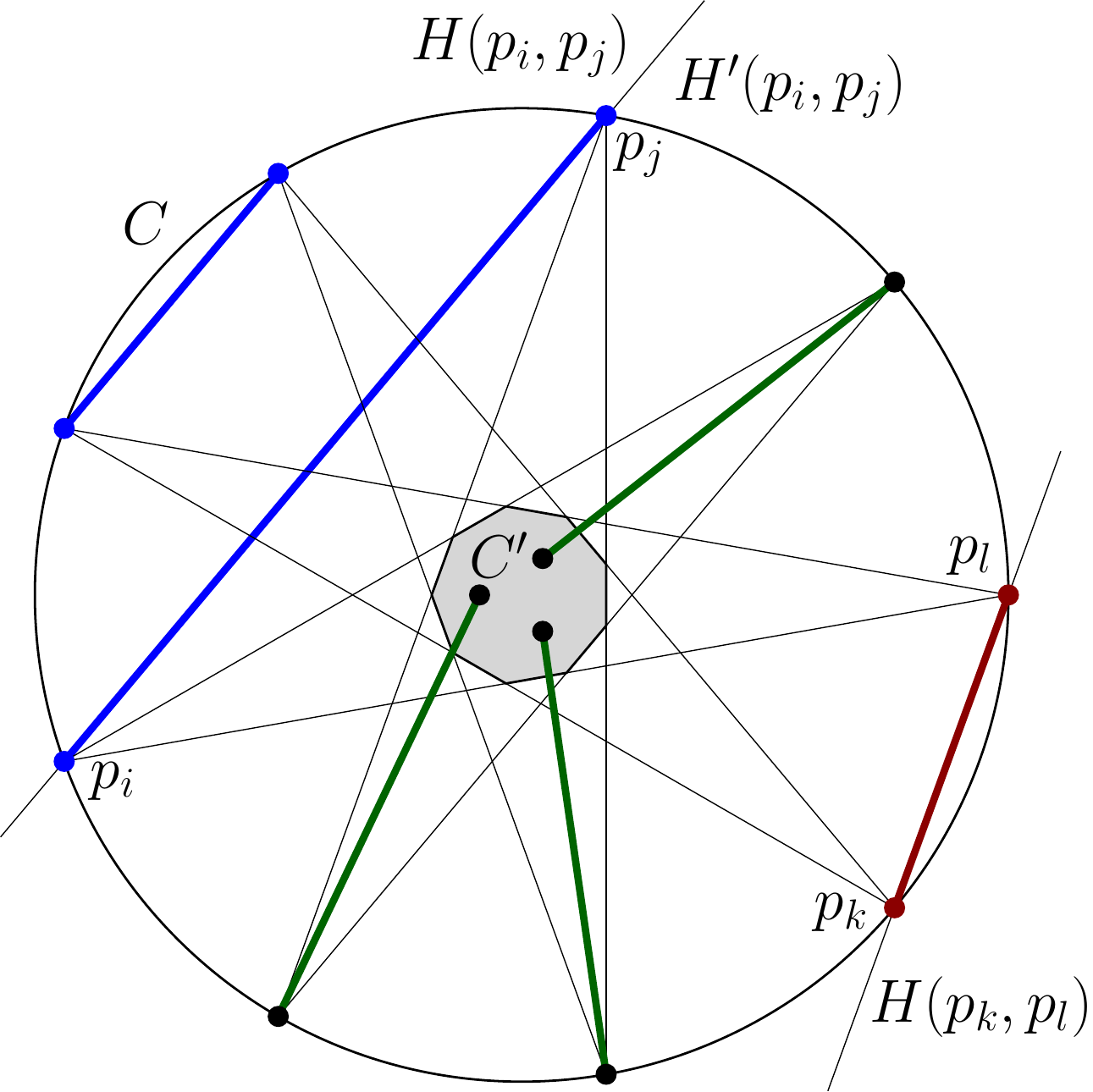}}
\\
(a) & (b)
\end{tabular}$
  \caption{(a) A variation of the regular wheel configuration. (b) Illustration of Theorem~\ref{n-over-3-thr}. The points of $A$ and the edges of $M(A)$ are in blue, and the points of $B$ and the edges of $M(B)$ are in red.}
\label{n-over-3-fig}
\end{figure}

\begin{lemma}
\label{two-wheel-edges}
 Let $P$ be a set of points in the plane where $|P|$ is an even number and $|P|\ge 6$. Let $\{X,Y\}$ be a partition of the points in $P$  such that $|X|$ is an odd number and $|Y|\le 2\lfloor\frac{|P|}{6}\rfloor-1$. If $P$ is in the wheel configuration described above, then any plane matching in $P$ contains at least two edges of \CH{P}.
\end{lemma}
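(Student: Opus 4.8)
The plan is to reduce everything to the convex case of Lemma~\ref{two-convex-edges}. Since $Y$ lies in the interior region $C'$, we have $\CH{P}=\CH{X}$, a regular $|X|$-gon, so the edges of $\CH{P}$ are exactly the segments joining cyclically adjacent points of $X$; call these the \emph{boundary edges}. Fix a plane matching $M$ and classify its edges as $XX$, $XY$, or $YY$ by the classes of their endpoints. Because $|X|$ is odd, the number $b$ of $XY$-edges is odd, hence $b\ge 1$; call the $b$ points of $X$ incident to an $XY$-edge the \emph{special} points.

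The central geometric fact I would establish first is that no $XX$-edge can have a special point on the side of its supporting line not containing the centre of $C$. Indeed, every point of $Y$ lies in the convex region $\mathrm{int}(C')$, which contains the centre, and by construction no segment between two points of $X$ meets $\mathrm{int}(C')$; thus all of $Y$ lies strictly on the centre side of the line through any $XX$-edge. If a special point $s$ lay on the opposite (outer) side of an $XX$-edge $pq$, the edge of $M$ joining $s$ to its partner in $Y$ would run from the outer circular segment to a point of $Y$ outside it, and so would cross the chord $pq$, contradicting planarity.

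From this I would extract the structure of $M$. Deleting the special points splits the circular order of $X$ into at most $b$ maximal runs of consecutive non-special points, and I claim every $XX$-edge joins two points of the \emph{same} run: if its endpoints were in different runs, both arcs of the chord would contain a special point (the special points are precisely what separate the runs), so its outer arc would contain one, contradicting the fact above. Hence inside each run the $XX$-edges form a plane matching of an even number of points in convex position, and Lemma~\ref{two-convex-edges} applies: a run of size $\ge 4$ contributes at least two edges of its own convex hull, at most one of which is the ``closing chord'' joining the two extreme points of the run, so it contributes at least one boundary edge of $P$; a run of size $2$ contributes its single (boundary) edge.

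It then remains to produce two boundary edges. If at least two runs are nonempty, each yields a distinct boundary edge and we are done, so the delicate case --- which I expect to be the main obstacle --- is a single nonempty run $A$, occurring exactly when the special points are consecutive. This is where the hypothesis $|Y|\le 2\lfloor|P|/6\rfloor-1$ enters: using $b\le|Y|$ and $|X|=|P|-|Y|$ it gives $|A|-b=|X|-2b\ge |P|-3|Y|\ge 3$, which forces $A$ to span strictly more than half of $C$. Consequently the centre lies on the $A$-side of the closing chord of $A$, placing the nonempty block of special points on its outer side; by the central fact the closing chord cannot belong to $M$. Therefore the two hull edges of $A$ guaranteed by Lemma~\ref{two-convex-edges} are both boundary edges of $P$, giving the required two edges of $\CH{P}$. (The case of zero nonempty runs cannot arise, since $|A|=|X|-b>0$.)
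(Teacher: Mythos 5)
Your proof is correct, and it reaches the conclusion by a genuinely different decomposition than the paper's. The paper argues extremally: it takes the \emph{longest} edge $p_ip_j$ of $M$ with both endpoints in $X$, observes that the outer side of $\ell(p_i,p_j)$ is a convex subset $A\subset X$ of size at most $\frac{|X|-1}{2}$ that $M$ matches within itself, extracts one edge of \CH{P} from $M(A)$ via Lemma~\ref{two-convex-edges}, and then repeats the argument once on the inner side, where the bound on $|Y|$ guarantees a second $XX$-edge and the maximality of $p_ip_j$ keeps the second convex piece disjoint from $A$. You instead describe the global structure of $M$: the separation fact (an $XX$-edge of $M$ cannot have a special point on its outer side, proved from the same observation the paper records, namely that $Y\subset C'$ lies on the centre side of every chord of $X$) confines every $XX$-edge to a single run of consecutive non-special points, each run is therefore perfectly matched within itself, and Lemma~\ref{two-convex-edges} is applied once per run. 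The hypothesis $|Y|\le 2\lfloor|P|/6\rfloor-1$ enters both proofs but for different purposes: the paper uses it to produce the second $XX$-edge on the centre side, while you need it only in the single-run case, where $|X|-2b\ge|P|-3|Y|\ge 3$ gives $|A|-1>|X|/2$, so the closing chord of the run has the special block on its outer side and is excluded by your separation fact. Your route costs an extra case split (one run versus several) but buys a sharper picture: it locates all $XX$-edges, shows every nonempty run contributes a hull edge, and so in fact yields at least as many edges of \CH{P} as there are nonempty runs when that number exceeds two. All the individual steps check out, including the crossing argument (the segment from a special point in the outer circular segment to its partner in the interior of $C'$ must exit through the open chord, since past its endpoint on the circle it stays in the open disk), the evenness of each run, and the arithmetic in the single-run case.
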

\begin{proof}
 Consider a plane matching $M$ of $P$. It is obvious that $\CH{P}=\CH{X}$; we show that $M$ contains at least two edges of $\CH{X}$. Note that $|X|=|P|-|Y|$, and both $|X|$ and $|Y|$ are odd numbers. Observe that $|X|\ge 4\lfloor\frac{|P|}{6}\rfloor+1=2|Y|+3\ge 5$; which implies that $|Y|\le \frac{|X|-1}{2}-1$. Thus, $|X|>|Y|$, and hence there is at least one edge in $M$ with both endpoints in $X$. Let $p_ip_j$ be the longest such edge. Recall that $C'\subset H'(p_i,p_j)$. Let $A$ be the set of points of $P$ in $H(p_i,p_j)$ (including $p_i$ and $p_j$), and let $A'$ be the set of points of $P$ in $H'(p_i,p_j)$ (excluding $p_i$ and $p_j$). By definition, $H(p_i,p_j)\cup \ell(p_i,p_j)$ does not contain any point of $Y$. Thus, $A\subset X$ and $A$ is in convex position with $|A|\le \frac{|X|-1}{2}$ (note that $|X|$ is an odd number). Let $M(A)$ and $M(A')$ be the edges of $M$ induced by the points in $A$ and $A'$, respectively. Clearly, $\{M(A),M(A')\}$ is a partition of the edges of $M$, and hence $M(A)$ (resp. $M(A')$) is a plane perfect matching for $A$ (resp. $A'$). We show that each of $M(A)$ and $M(A')$ contains at least one edge of \CH{X}. First we consider $M(A)$. If $|A|=2$, then $p_ip_j$ is the only edge in $M(A)$ and it is an edge of \CH{X}. Assume that $|A|\ge 4$. By Lemma~\ref{two-convex-edges}, $M(A)$ contains at least two edges of \CH{A}. On the other hand each edge of \CH{A}, except for $p_ip_j$, is also an edge of \CH{X}; see Figure~\ref{n-over-3-fig}(b). This implies that $M(A)-\{p_ip_j\}$ contains at least one edge of \CH{X}. Now we consider $M(A')$. Let $X'=A'\cap X$, that is, $\{A,X'\}$ is a partition of the points in $X$. Since $|A|\le \frac{|X|-1}{2}$, we have $|X'|\ge\frac{|X|+1}{2}$. Recall that $|Y|\le \frac{|X|-1}{2}-1$. Thus, $|Y|<|X'|$, and hence there is an edge $p_kp_l\in M(A')$ with both $p_k$ and $p_l$ in $X'$. Let $B$ be the set of points of $P$ in $H(p_k,p_l)$ (including $p_k$ and $p_l$). By definition, $H(p_k,p_l)\cup \ell(p_k,p_l)$ does not contain any point of $Y$. Thus, $B\subset X$ and $B$ is in convex position. On the other hand, by the choice of $p_ip_j$ as the longest edge, $A$ cannot be a subset of $B$ and hence $B\subset X'$. Let $M(B)$  be the edges of $M(A')$ induced by the points in $B$. We show that $M(B)$ contains at least one edge of \CH{X}. If $|B|=2$, then $p_kp_l$ is the only edge in $M(B)$ and it is an edge of \CH{X}. Assume that $|B|\ge 4$. By Lemma~\ref{two-convex-edges}, $M(B)$ contains at least two edges of \CH{B}. On the other hand, each edge of \CH{B}, except for $p_kp_l$, is also an edge of \CH{X}; see Figure~\ref{n-over-3-fig}(b). This implies that $M(B)-\{p_kp_l\}$ contains at least one edge of \CH{X}. This completes the proof. 
\end{proof}

\begin{figure}[htb]
  \centering
\setlength{\tabcolsep}{0in}
  $\begin{tabular}{cc}

\multicolumn{1}{m{.5\columnwidth}}{\centering\includegraphics[width=.33\columnwidth]{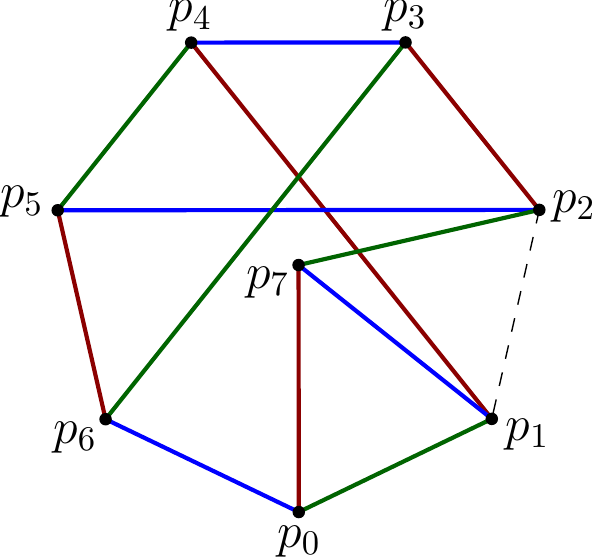}}
&\multicolumn{1}{m{.5\columnwidth}}{\centering\includegraphics[width=.32\columnwidth]{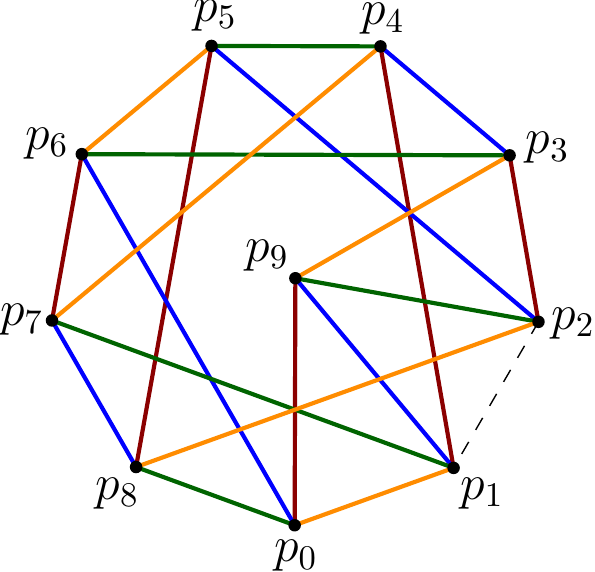}} \\
(a) & (b)
\end{tabular}$
  \caption{Points in the regular configuration with (a) $n=4k$ and (b) $n=4k+2$; one of the edges in \CH{P} cannot be matched.}
\label{wheel-fig}
\end{figure}

\begin{theorem}
\label{wheel}
For a set $P$ of $n\ge 6$ points in the regular wheel configuration in the plane with $n$ even, the maximum number of plane matchings that can be packed into $\Kn{(P)}$ is $\frac{n}{2}-1$.
\end{theorem}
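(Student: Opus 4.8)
The plan is to prove the two inequalities separately: a counting upper bound that follows almost immediately from Lemma~\ref{two-wheel-edges}, and an explicit packing of $\frac n2-1$ matchings for the lower bound. For the upper bound I would first verify that the regular wheel meets the hypotheses of Lemma~\ref{two-wheel-edges}: the set $X$ of the $n-1$ circle points has odd size (as $n$ is even) and the single centre plays the role of $Y$, so $|Y|=1\le 2\lfloor n/6\rfloor-1$ precisely because $n\ge6$. Hence every plane matching of $P$ contains at least two edges of $\CH{P}=\CH{X}$, and $\CH{X}$ has exactly $n-1$ edges. Since the matchings in a packing are edge-disjoint, a family of $t$ plane matchings uses at least $2t$ distinct hull edges, so $2t\le n-1$ and therefore $t\le\lfloor(n-1)/2\rfloor=\frac n2-1$.

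For the lower bound I would build the matchings by rotation. Identify the circle points with $\mathbb{Z}_{n-1}$ in cyclic order as $p_0,\dots,p_{n-2}$, set $s=\frac{n-2}{2}$ so that $n-1=2s+1$, and let $\rho$ be the rotation $p_k\mapsto p_{k+1}$ fixing the centre $c$. I first design a single \emph{plane} base matching $M_0$ containing the centre edge $\{c,p_0\}$, and then set $M_i=\rho^i(M_0)$ for $i=0,\dots,\frac n2-2$; since $\rho$ is a symmetry of the regular wheel it preserves crossings, so once $M_0$ is plane every $M_i$ is automatically a plane perfect matching. To construct $M_0$ I split the remaining points by the diameter through $p_0$ into two arcs $R=\{p_1,\dots,p_s\}$ and $L=\{p_{s+1},\dots,p_{2s}\}$, each of size $s$. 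If $n\equiv2\pmod4$ then $s$ is even and I match each arc internally by the fully nested non-crossing matching (as in Theorem~\ref{convex}); if $n\equiv0\pmod4$ then $s$ is odd, so I first join $p_s$ and $p_{2s}$ by one cross-chord and then nest the two remaining even arcs $\{p_1,\dots,p_{s-1}\}$ and $\{p_{s+1},\dots,p_{2s-1}\}$. Because every chord of $M_0$ other than the cross-chord lies strictly on one side of the diameter, none of them crosses the radius $cp_0$; the cross-chord $\{p_s,p_{2s}\}$ must be checked separately to pass on the far side of $c$ from $p_0$, and this is the geometric heart of the construction (Figure~\ref{wheel-fig}).

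The remaining and, I expect, most delicate step is edge-disjointness, which by the rotational symmetry reduces to showing that the boundary chords of $M_0,\dots,M_{n/2-2}$ are pairwise distinct (the centre edges $\{c,p_i\}$ are visibly distinct, and rotations keep boundary chords and centre edges separate). Here I would use the difference method: assign to a chord $\{p_a,p_b\}$ its length $\ell=\min(|a-b|,(n-1)-|a-b|)$. Since $n-1$ is odd, each length class is a single $\rho$-orbit of size $n-1$, so two rotated copies can coincide only when the two base chords have equal length and are offset by some $\delta\equiv i-i'$ with $|i-i'|\le\frac n2-2$, i.e.\ by a ``bad'' offset in $\{1,\dots,\frac n2-2\}\cup\{\frac n2+1,\dots,n-2\}$. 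The construction is arranged to avoid exactly this: inside each arc the nested chords have pairwise distinct odd lengths, while the matched arc $L$ is the image of the matched arc $R$ under $\rho^{s}$, so any length occurring in both arcs occurs at offset $s=\frac n2-1$, which is safe; and in the $n\equiv0\pmod4$ case the cross-chord $\{p_s,p_{2s}\}$ has length $s$, strictly larger than every arc length, so its rotates collide with nothing. Hence all $\frac n2-1$ matchings are pairwise edge-disjoint, which meets the upper bound and gives the exact value $\frac n2-1$.
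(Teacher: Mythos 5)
Your proposal is correct and follows essentially the same strategy as the paper: the upper bound is identical (every plane matching uses two of the $n-1$ edges of \CH{P} by Lemma~\ref{two-wheel-edges}, so at most $\lfloor\frac{n-1}{2}\rfloor=\frac{n}{2}-1$ edge-disjoint matchings fit), and the lower bound is likewise obtained by rotating a single plane base matching containing one spoke through $\frac{n}{2}-1$ positions. The only differences are cosmetic: your base matching uses nested chords in the two half-arcs (plus a cross-chord when $\frac{n-2}{2}$ is odd) rather than the paper's chords ``parallel'' to the spoke, and your length/offset argument for pairwise edge-disjointness is spelled out in more detail than the paper, which simply asserts disjointness.
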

\begin{proof}
In the regular wheel configuration, $P$ is partitioned into a point set $X$ of size $n-1$ and a point set $Y$ of size 1. The points of $X$ are placed regularly on a circle $C$ and the (only) point of $Y$ is the center of $C$. By Lemma~\ref{two-wheel-edges}, every plane matching in $P$ contains at least two edges of \CH{P}. On the other hand, \CH{P} has $n-1$ edges. Therefore, the number of plane matchings that can be packed into $\Kn{(P)}$ is at most $\frac{n-1}{2}$. Since $n$ is an even number and the number of plane matchings is an integer, we can pack at most $\frac{n}{2}-1$ plane matchings into $\Kn{(P)}$.

Now we show how to pack $\frac{n}{2}-1$ plane matchings into $\Kn{(P)}$.
Let $P=\{p_0,\dots,p_{n-1}\}$, and w.l.o.g. assume that $p_{n-1}$ is the center of $C$. Let $P'=P-\{p_{n-1}\}$, and let $p_0,p_1,\dots,p_{n-2}$ be the radial ordering of the points in $P'$ with respect to $p_{n-1}$. For each $p_i$ in the radial ordering, where $0\le i\le \frac{n}{2}-2$, let $$R_i=\{p_{i+j}p_{i+2\lceil(n-2)/4\rceil-j+1}:j=1,\dots,\lceil(n-2)/4\rceil\},$$ and $$L_i=\{p_{i-j}p_{i-2\lfloor(n-2)/4\rfloor+j-1}:j=1,\dots,\lfloor(n-2)/4\rfloor\}$$ (all indices are modulo $n-1$). Let $M_i=R_i\cup L_i\cup \{p_ip_{n-1}\}$; informally speaking, $M_i$ is a plane perfect matching obtained from edge $p_ip_{n-1}$ and edges parallel to $p_ip_{n-1}$. See Figure~\ref{wheel-fig}(a) for the case where $n=4k$ and Figure~\ref{wheel-fig}(b) for the case where $n=4k+2$. Let $\mathcal{M}=\{M_i:i=0,\dots,\frac{n}{2}-2\}$. The matchings in $\mathcal{M}$ are plane and pairwise edge-disjoint. Thus, $\mathcal{M}$ is a set of $\frac{n}{2}-1$ plane matchings that can be packed into $\Kn{(P)}$. 
\end{proof}

In the following theorem we use the wheel configuration to show that for any even integer $n\ge 6$, there exists a set $P$ of $n$ points in the plane, such that no more than $\lceil\frac{n}{3}\rceil$ plane matchings can be packed into $\Kn{(P)}$. 

\begin{theorem}
\label{n-over-3-thr}
For any even number $n\ge 6$, there exists a set $P$ of $n$ points in the plane such that no more than $\lceil\frac{n}{3}\rceil$ plane matchings can be packed into $\Kn{(P)}$.
\end{theorem}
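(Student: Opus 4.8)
The plan is to instantiate the generalized wheel configuration introduced just before Lemma~\ref{two-wheel-edges} with a carefully chosen split $\{X,Y\}$, and then let that lemma do all the work. The key leverage is that Lemma~\ref{two-wheel-edges} guarantees every plane matching uses at least two edges of $\CH{P}$, and in this configuration $\CH{P}=\CH{X}$ has only $|X|$ edges. Since pairwise edge-disjoint matchings cannot reuse a hull edge, if we pack $t$ plane matchings they together consume at least $2t$ distinct hull edges, forcing $2t\le |X|$, i.e.\ $t\le\lfloor |X|/2\rfloor$. So the entire task reduces to making $|X|$ as small as the hypotheses of Lemma~\ref{two-wheel-edges} permit, which means pushing $|Y|$ up to its allowed maximum. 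Note that this is purely an upper-bound (non-existence) statement, so no explicit packing of matchings needs to be exhibited.

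Concretely, I would set $|Y|=2\lfloor n/6\rfloor-1$ and $|X|=n-|Y|=n-2\lfloor n/6\rfloor+1$, placing the $|X|$ points regularly on the circle $C$ and the $|Y|$ points inside the region $C'$, exactly as the configuration prescribes. I then verify the hypotheses of Lemma~\ref{two-wheel-edges}: since $n$ is even and $2\lfloor n/6\rfloor$ is even, $|X|$ is odd; for $n\ge 6$ we have $|X|=n-2\lfloor n/6\rfloor+1\ge 2n/3+1\ge 5\ge 3$; the bound $|Y|\le 2\lfloor n/6\rfloor-1$ holds with equality; and $|P|=n\ge 6$ is even. Hence the lemma applies and every plane matching in $P$ contains at least two edges of $\CH{P}=\CH{X}$.

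Combining this with the counting argument above, any family of pairwise edge-disjoint plane matchings packed into $\Kn{(P)}$ has size at most $\lfloor |X|/2\rfloor=(|X|-1)/2$, using that $|X|$ is odd. It then remains to check the arithmetic identity $(|X|-1)/2=\lceil n/3\rceil$ across the residue classes $n\equiv 0,2,4\pmod 6$: writing $n=6m$, $6m+2$, $6m+4$ gives $|X|=4m+1$, $4m+3$, $4m+5$ and hence $(|X|-1)/2=2m$, $2m+1$, $2m+2$, which matches $\lceil n/3\rceil$ in each case. I expect the only real obstacle to be this bookkeeping: one must confirm that the chosen sizes \emph{saturate} the lemma's constraint $|Y|\le 2\lfloor n/6\rfloor-1$ (so that $|X|$ is minimal) while keeping $|X|$ odd, and that the resulting $\lfloor |X|/2\rfloor$ equals $\lceil n/3\rceil$ uniformly in $n$. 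Once the split is fixed, no geometric difficulty remains, since Lemma~\ref{two-wheel-edges} supplies the crucial two-hull-edge property.
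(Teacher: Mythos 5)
Your proposal is correct and follows essentially the same route as the paper: the identical wheel configuration with $|Y|=2\lfloor n/6\rfloor-1$, the same appeal to Lemma~\ref{two-wheel-edges}, and the same hull-edge counting bound $t\le(|X|-1)/2$. Your explicit residue-class check modulo $6$ is in fact a cleaner way to finish than the paper's chain of inequalities (which is slightly loose as written, though the exact quantity $\frac{n}{2}-\lfloor\frac{n}{6}\rfloor$ does equal $\lceil\frac{n}{3}\rceil$), but the argument is the same.
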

\begin{proof}
The set $P$ of $n$ points is partitioned into $X$ and $Y$, where $|Y|=2\lfloor\frac{n}{6}\rfloor-1$ and $|X|=n-|Y|$. The points in $X$ are regularly placed on a circle $C$ and the points in $Y$ are in the interior $\bigcap_{p,q\in X}H'(p,q)$. By Lemma~\ref{two-wheel-edges}, any plane matching in $P$ contains at least two edges of \CH{P}.
Since $\CH{P}=\CH{X}$, any plane matching of $P$ contains at least two edges of \CH{X}. Thus, if $\mathcal{M}$ denotes any set of plane matchings which can be packed into $\Kn{(P)}$, we have (note that $|X|$ is odd)
$$|\mathcal{M}|\le \frac{|X|-1}{2}=\frac{n-2\lfloor n/6\rfloor}{2}=\frac{n}{2}-\lfloor\frac{n}{6}\rfloor\le \frac{n}{2}-\frac{n-5}{6}\le \lceil\frac{n}{3}\rceil.$$
\end{proof}

\subsection{Points in General Position}
\label{general-position-section}
In this section we consider the problem of packing plane matchings for point sets in general position (no three points on a line) in the plane. 
Let $P$ be a set of $n$ points in general position in the plane, with $n$ even. Let $M(P)$ denote the maximum number of plane matchings that can be packed into $\Kn{(P)}$. As mentioned earlier, a trivial lower bound of $2$ (when $n \ge 4$) is obtained from a minimum weight Hamiltonian cycle, which is plane and consists of two edge-disjoint perfect matchings. 

In this section we show that at least $\lceil\log_2{n}\rceil-2$ plane matchings can be packed into $\Kn{(P)}$. As a warm-up, we first show that if $n$ is a power of two, then $\log_2{n}$ plane matchings can be packed into $\Kn{(P)}$. Then we extend this result to get a lower bound of $\lceil\log_2{n}\rceil-2$ for every point set with an even number of points. We also show that if $n\ge 8$, then at least three plane matchings can be packed into $\Kn{(P)}$, which improves the result for $n=10$, $12$, and $14$. Note that, as a result of Theorem~\ref{n-over-3-thr}, there exists a set of $n=6$ points such that no more than $\lceil\frac{n}{3}\rceil=2$ plane matchings can be packed into $\Kn{(P)}$. First consider the following observation.

\begin{observation}
\label{partition-obs}
 Let $\mathcal{P}=\{P_1,\dots, P_k\}$ be a partition of the point set $P$, such that $|P_i|$ is even and $\CH{P_i}\cap\CH{P_j}=\emptyset$ for all $1\le i,j\le k$ where $i\neq j$. Let $i$ be an index such that, $M(P_i)=\min\{M(P_j):1\le j\le k\}$. Then, $M(P)\ge M(P_i)$.
\end{observation}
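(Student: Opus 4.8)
The plan is to glue the packings of the parts together layer by layer. First I would set $m = M(P_i) = \min\{M(P_j):1\le j\le k\}$; since $M(P_j)\ge m$ for every $j$, each $\Kn{(P_j)}$ admits $m$ pairwise edge-disjoint plane perfect matchings $M_j^1,\dots,M_j^m$ (here I use that $|P_j|$ is even, so that a perfect matching of $P_j$ exists at all). For each layer $\ell\in\{1,\dots,m\}$ I would then form $M^\ell=\bigcup_{j=1}^k M_j^\ell$ and argue that $M^1,\dots,M^m$ is a packing of $m$ plane perfect matchings into $\Kn{(P)}$, which yields $M(P)\ge m=M(P_i)$.

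The verification splits into three routine checks. Each $M^\ell$ is a perfect matching of $P$, because $P_1,\dots,P_k$ partition $P$ and $M_j^\ell$ covers all of $P_j$ exactly once. Each $M^\ell$ is plane, because within a single part $M_j^\ell$ is already non-crossing, while an edge with both endpoints in $P_j$ is a chord of $\CH{P_j}$ and hence lies inside $\CH{P_j}$; as the hulls $\CH{P_1},\dots,\CH{P_k}$ are pairwise disjoint, edges coming from different parts lie in disjoint regions of the plane and cannot cross. Finally the $M^\ell$ are pairwise edge-disjoint: every edge lies inside one part, so an edge shared by $M^\ell$ and $M^{\ell'}$ would have to come from a common part $P_j$, yet $M_j^\ell$ and $M_j^{\ell'}$ are edge-disjoint whenever $\ell\neq\ell'$.

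There is no real obstacle here; the construction is the obvious ``parallel'' one, and the sole place the hypotheses enter is the disjointness of the convex hulls, which is precisely what makes cross-part edges harmless. The only step that deserves an explicit sentence is why an intra-part edge stays inside its part's hull---namely it is a segment joining two points of $\CH{P_j}$ and is therefore contained in that convex region---so that hull-disjointness forces non-crossing between edges of distinct parts.
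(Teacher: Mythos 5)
Your proof is correct and is exactly the argument the paper intends: the observation is stated without proof, but its use in Theorem~\ref{n-power2} (pairing a matching of $R$ with a matching of $B$ to get one plane matching of $P$) is precisely your layer-by-layer gluing. The one step worth spelling out---that an intra-part edge lies inside $\CH{P_j}$ by convexity, so hull-disjointness prevents cross-part crossings---is the right thing to make explicit, and you did.
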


\begin{theorem}
\label{n-power2}
For a set $P$ of $n$ points in general position in the plane, where $n$ is a power of 2, at least $\log_2{n}$ plane matchings can be packed into $\Kn{(P)}$.
\end{theorem}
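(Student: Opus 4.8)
The plan is to argue by induction on $n$, peeling off a single halving line to reduce to two instances of half the size and then \emph{adding one extra matching} built entirely from edges that cross the line. For the base case $n=2$ there is a unique edge, which is a plane matching, and $\log_2 2 = 1$. For the inductive step I assume $n\ge 4$ and that the claim holds for $n/2$ (again a power of two). I would first rotate coordinates so that all $x$-coordinates are distinct (possible by general position), then cut $P$ by a vertical line $\ell$ through the median, splitting $P$ into its leftmost $\frac n2$ points $P_1$ and rightmost $\frac n2$ points $P_2$. Both halves then have even size $\frac n2$, and $\CH{P_1}$ and $\CH{P_2}$ lie in disjoint half-planes, hence are disjoint.

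By the induction hypothesis each of $\Kn{(P_1)}$ and $\Kn{(P_2)}$ admits $\log_2\frac n2 = \log_2 n - 1$ pairwise edge-disjoint plane matchings, say $M_1^{(1)},\dots,M_1^{(\log_2 n-1)}$ and $M_2^{(1)},\dots,M_2^{(\log_2 n-1)}$. For each $j$ I set $M^{(j)} = M_1^{(j)}\cup M_2^{(j)}$. Since $\CH{P_1}$ and $\CH{P_2}$ are disjoint, no edge inside $\CH{P_1}$ can cross an edge inside $\CH{P_2}$, so each $M^{(j)}$ is a plane perfect matching of $P$; and because the two halves contribute edge-disjoint families, the $M^{(j)}$ are pairwise edge-disjoint. (This is exactly the content of Observation~\ref{partition-obs} applied to $\{P_1,P_2\}$, made constructive.) This produces $\log_2 n - 1$ matchings, every edge of which has both endpoints on the same side of $\ell$.

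To obtain the last matching I would color $P_1$ red and $P_2$ blue, each of size $\frac n2$, and take a plane bichromatic matching, which exists because $R$ and $B$ are separated by $\ell$: this is precisely the $\Tangent{R}{B}$ construction (equivalently $\Cut{R}{B}$), and it also follows from Theorem~\ref{Aichholzer}. Every edge of this matching crosses $\ell$, so it is automatically edge-disjoint from all the $M^{(j)}$, which use only edges interior to one side. Together these give $\log_2 n$ pairwise edge-disjoint plane matchings, completing the induction.

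The crux of the argument is the clean split of edges into \emph{intra-side} and \emph{crossing} edges: the recursion produces matchings living strictly inside the two halves, while the added matching lives strictly across the cut, so edge-disjointness \emph{between} the two groups is free and never needs a geometric check. The main obstacle is therefore conceptual rather than computational, namely recognizing that the recursion by itself (Observation~\ref{partition-obs}) only yields $\log_2 n - 1$ matchings, and that the single missing matching can always be recovered from the crossing edges guaranteed by a separated bichromatic matching.
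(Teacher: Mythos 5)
Your proposal is correct and follows essentially the same route as the paper: induct by splitting $P$ with a vertical halving line, combine the matchings of the two halves via Observation~\ref{partition-obs} to get $\log_2 n - 1$ intra-side matchings, and add one more matching consisting entirely of crossing edges obtained from a plane bichromatic matching of the two separated halves. The paper uses \Cut{R}{B} for this last step while you also mention \Tangent{R}{B}, but the disjointness argument (crossing versus intra-side edges) is identical.
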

\begin{proof}
We prove this theorem by induction. The statement of the theorem holds for the base case, where $n=2$. Assume that $n\ge 4$. Recall that $M(P)$ denotes the maximum number of plane matchings that can be packed into $\Kn{(P)}$. W.l.o.g. assume that a vertical line $\ell$ partitions $P$ into sets $R$ and $B$, each of size $\frac{n}{2}$. By the induction hypothesis, $M(R),M(B)\ge\log_2{(\frac{n}{2})}$. By Observation~\ref{partition-obs}, $M(P)\ge \min\{M(R),M(B)\}\ge\log_2{(\frac{n}{2})}$. That is, by pairing a matching $M_R$ in $R$ with a matching $M_B$ in $B$ we get a plane matching $M_P$ in $\Kn{(P)}$, such that each edge in $M_P$ has both endpoints in $R$ or in $B$. If we consider the points in $R$ as red and the points in $B$ as blue, \Cut{R}{B} (see Section~\ref{colored-matching-section}) gives us a plane perfect matching $M'_P$ in $\Kn{(P)}$, such that each edge in $M'_P$ has one endpoint in $R$ and one endpoint in $B$. That is $M'_P\cap M_P=\emptyset$. Therefore, we obtain one more plane matching in $\Kn{(P)}$, which implies that $M(P)\ge \log_2{(\frac{n}{2})}+1=\log_2n$.
\end{proof}

Let $R$ and $B$ be two point sets which are separated by a line. A {\em crossing tangent} between $R$ and $B$ is a line $l$ touching \CH{R} and \CH{B} such that $R$ and $B$ lie on different sides of $l$. Note that $l$ contains a point $r\in R$, a point $b\in B$, and consequently the line segment $rb$; we say that $l$ is subtended from $rb$. It is obvious that there are two (intersecting) crossing tangents between $R$ and $B$; see Figure~\ref{three-matching-fig}. 

\begin{figure}[htb]
  \centering
\setlength{\tabcolsep}{0in}
  $\begin{tabular}{cc}
\multicolumn{1}{m{.5\columnwidth}}{\centering\includegraphics[width=.35\columnwidth]{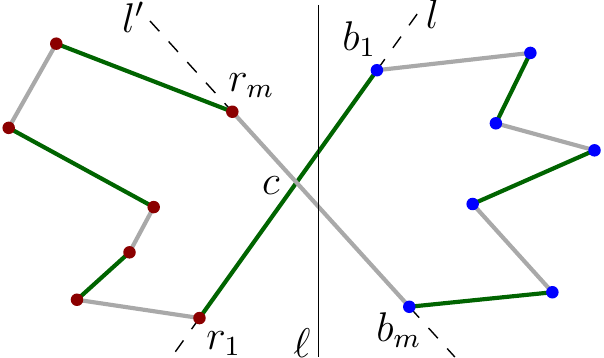}}
&\multicolumn{1}{m{.5\columnwidth}}{\centering\includegraphics[width=.35\columnwidth]{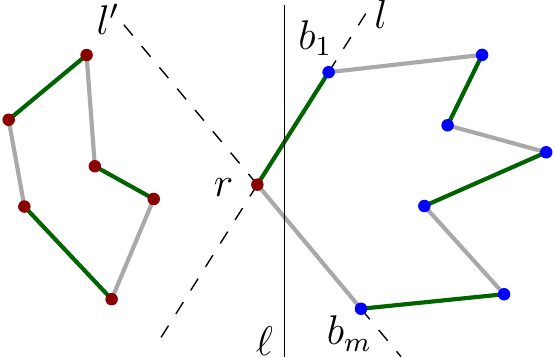}} \\
(a) & (b)
\end{tabular}$
  \caption{(a) The crossing tangents intersect at a point $c\notin P$: $R$ and $B$ are sorted clockwise around $c$, (b) The crossing tangents intersect at a point $r\in R$: $B$ is sorted clockwise around $r$. $M_1$ and $M_2$ are shown by green and gray line segments.}
\label{three-matching-fig}
\end{figure}

\begin{theorem}
\label{3-matching-theorem}
For a set $P$ of $n\ge 8$ points in general position in the plane with $n$ even, at least three plane matchings can be packed into $\Kn{(P)}$.
\end{theorem}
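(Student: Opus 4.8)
The plan is to reduce to the situation developed in the two paragraphs preceding the statement: first split $P$ by a line into two equal halves, then exploit the two crossing tangents. Concretely, I would sweep a line across $P$ to obtain a partition into sets $R$ and $B$ with $|R|=|B|=\frac{n}{2}$ and $\CH{R}\cap\CH{B}=\emptyset$ (a ham sandwich cut, or simply a vertical line avoiding the points, suffices). Let the two crossing tangents meet at the point $c$, and let $r_1b_1$ and $r_2b_2$ be the two segments from which they are subtended, so that $r_1b_1$ and $r_2b_2$ cross at $c$. As in Figure~\ref{three-matching-fig}, I would sort all of $P$ by angle around $c$. The structural fact driving the construction is that each crossing-tangent line has all of $R$ on one side and all of $B$ on the other; since both lines pass through $c$, the set $R$ occupies one angular wedge at $c$ and $B$ the antipodal wedge, with $r_1b_1$ and $r_2b_2$ bounding these two arcs.

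Next I would build two plane matchings $M_1\ni r_1b_1$ and $M_2\ni r_2b_2$ by a parallel-class sweep around $c$ anchored at the two tangent edges respectively (the green and gray matchings of Figure~\ref{three-matching-fig}). Planarity of each is forced by the separating property of its anchoring tangent: once the tangent edge is fixed, the remaining points of $R$ and of $B$ lie on opposite sides of that line, so the rest of the matching is a non-crossing matching of a line-separated pair and can be completed exactly as in \Tangent{R}{B}, with no edge crossing the anchor. Because $M_1$ and $M_2$ are anchored at the two \emph{distinct} tangent edges $r_1b_1$ and $r_2b_2$ — which cross at $c$ and hence cannot lie in a common plane matching — and because the two sweeps run in opposite rotational senses around $c$, I expect them to come out pairwise edge-disjoint.

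It then remains to produce a third plane matching $M_3$ edge-disjoint from $M_1$ and $M_2$, and here I would split on the parity of $\frac{n}{2}$. If $\frac{n}{2}$ is even, then since $n\ge 8$ both $|R|,|B|\ge 4$ are even, so $R$ and $B$ each admit an interior plane perfect matching; their union is plane because $\CH{R}\cap\CH{B}=\emptyset$, and it is automatically edge-disjoint from $M_1,M_2$, whose edges all cross between the two arcs. (In this case one could even bypass the tangents entirely and argue $M(P)\ge\min\{M(R),M(B)\}+1\ge 3$ as in the proof of Theorem~\ref{n-power2}, using Observation~\ref{partition-obs} together with one \Cut{R}{B} matching.) If $\frac{n}{2}$ is odd there is no interior perfect matching of $R$ or of $B$, which is the delicate case; here I would obtain $M_3$ from a third sweep direction around $c$ distinct from the two tangents, certifying that the extra angular room guaranteed by $n\ge 8$ keeps this class non-crossing and edge-disjoint from $M_1,M_2$.

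The main obstacle I anticipate is exactly this odd case: showing that a third parallel class stays plane for \emph{every} configuration (not merely convex ones) and is edge-disjoint from the first two. This is also where the hypothesis $n\ge 8$ must enter essentially, since by Theorem~\ref{n-over-3-thr} there is a $6$-point set admitting only $\lceil\frac{n}{3}\rceil=2$ packed plane matchings; any correct argument must therefore fail to deliver a third matching precisely at $n=6$.
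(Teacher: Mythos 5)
Your opening moves (split $P$ by a vertical line into equal halves $R$ and $B$, then exploit the two crossing tangents) match the paper's, and your parenthetical for the case $4\mid n$ is exactly the paper's case (a): two monochromatic plane matchings inside each half from a minimum-length Hamiltonian cycle, plus one bichromatic matching $\Cut{R}{B}$, giving three. But the case $n\equiv 2\pmod 4$ — which you yourself flag as the delicate one — is where the theorem actually needs work, and there your argument has two genuine gaps. First, your $M_1$ and $M_2$ are both bichromatic matchings anchored at the tangent edges $r_1b_1$ and $r_2b_2$ and completed by a tangent-style sweep on $R\setminus\{r_1\},B\setminus\{b_1\}$ (resp.\ $R\setminus\{r_2\},B\setminus\{b_2\}$). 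Nothing forces two such sweeps to be edge-disjoint: whenever some red--blue pair not involving the four anchor points realizes the upper tangent of both reduced configurations (the generic situation), both sweeps select that same edge. ``I expect them to come out pairwise edge-disjoint'' is precisely the step that needs a proof, and for this construction it is false in general. Second, your $M_3$ in the odd case is never constructed: ``a third sweep direction around $c$'' is not defined, and a parallel class of edges taken around an interior point need not be plane when the points are not in convex position. You also tacitly assume the two crossing tangents meet at a point $c\notin P$; the paper must (and does) treat separately the subcase where they meet at a common endpoint $r\in P$.

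The paper closes the odd case by inverting your color roles. It forms one Hamiltonian cycle on $P$ consisting of the radially consecutive monochromatic edges $r_ir_{i+1}$ and $b_ib_{i+1}$ around $c$ together with the two crossing-tangent edges; since $m=n/2$ is odd, the two tangent edges fall into different alternation classes of this even cycle, so it splits into two plane matchings $M_1,M_2$, each containing exactly one bichromatic edge (a crossing tangent) and otherwise only edges internal to $R$ or to $B$. Then $M_3=\Tangent{R}{B}$ is entirely bichromatic, so the only candidates for a shared edge are the two crossing tangents themselves, and these are provably never chosen by the tangent procedure: every edge it selects has equally many red and blue points below its supporting line, whereas a crossing tangent has all of $R$ on one side and all of $B$ on the other. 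In the degenerate subcase $r=r'$ the point $r$ is reassigned to the blue side so that both parts become even and the cycle construction goes through. Repairing your proposal essentially requires adopting this monochromatic-cycle construction, so the odd case should be counted as missing rather than merely sketched.
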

\begin{proof}
We describe how to extract three edge-disjoint plane matchings, $M_1,M_2,M_3$, from $\Kn{(P)}$. Let $\ell$ be a vertical line which splits $P$ into sets $R$ and $B$, each of size $\frac{n}{2}$. Consider the points in $R$ as red and the points in $B$ as blue. We differentiate between two cases: (a) $n=4k$ and (b) $n= 4k+2$, for some integer $k>1$.

In case (a), both $R$ and $B$ have an even number of points. Let \M{1}{R} and \M{2}{R} (resp. \M{1}{B} and \M{2}{B}) be two edge-disjoint plane matchings in $R$ (resp. $B$) obtained by a minimum length Hamiltonian cycle in $R$ (resp. $B$). Let $M_1=\M{1}{R}\cup \M{1}{B}$ and $M_2=\M{2}{R}\cup \M{2}{B}$. Clearly $M_1$ and $M_2$ are edge-disjoint plane matchings for $P$. Let $M_3=\Cut{R}{B}$. It is obvious that $M_3$ is edge-disjoint from $M_1$ and $M_2$, which completes the proof in the first case.

In case (b), both $R$ and $B$ have an odd number of points and we cannot get a perfect matching in each of them. Let $l$ and $l'$ be the two crossing tangents between $R$ and $B$, subtended from $rb$ and $r'b'$, respectively. We differentiate between two cases: (i) $l$ and $l'$ intersect in the interior of $rb$ and $r'b'$, (ii) $l$ and $l'$ intersect at an endpoint of both $rb$ and $r'b'$; see Figure~\ref{three-matching-fig}.
\begin{itemize}
\item In case (i), let $c$ be the intersection point; see Figure~\ref{three-matching-fig}(a). Let $r_1, r_2, \dots, r_m$ and $b_1, b_2, \allowbreak\dots, \allowbreak b_m$ be the points of $R$ and $B$, respectively, sorted clockwise around $c$, where $m=\frac{n}{2}$, $r_1=r,r_m=r', b_1=b, b_m=b'$. Consider the Hamiltonian cycle $H=\{r_ir_{i+1}:1\le i< m\}\cup \{b_ib_{i+1}:1\le i< m\}\cup \{r_1b_1,r_mb_m\}$. Let $M_1$ and $M_2$ be the two edge-disjoint matchings obtained from $H$. Note that $r_1b_1$ and $r_mb_m$ cannot be in the same matching, thus, $M_1$ and $M_2$ are plane. Let $M_3=\Tangent{R}{B}$. As described in Section~\ref{colored-matching-section}, $M_3$ is a plane matching for $P$. In order to prove that $M_3\cap (M_1\cup M_2)=\emptyset$, we show that $rb$ and $r'b'$\textemdash which are the only edges in $M_1\cup M_2$ that connect a point in $R$ to a point in $B$\textemdash do not belong to $M_3$. Note that \Tangent{R}{B} iteratively selects an edge which has the same number of red and blue points below its supporting line, whereas the supporting lines of $rb$ and $r'b'$ have different numbers of red and blue points below them. Thus $rb$ and $r'b'$ are not considered by \Tangent{R}{B}. Therefore $M_3$ is edge-disjoint from $M_1$ and $M_2$. 

\item In case (ii), w.l.o.g. assume that $l$ and $l'$ intersect at the red endpoint of $rb$ and $r'b'$, i.e., $r=r'$; See Figure~\ref{three-matching-fig}(b). Let $R'=R\setminus \{r\}$ and $B'=B\cup \{r\}$. Note that both $R'$ and $B'$ have an even number of points and $|R'|,|B'|\ge 4$. Let \M{1}{R'} and \M{2}{R'} be two edge-disjoint plane matchings in $R'$ obtained by a minimum length Hamiltonian cycle in $R'$. Let $b_1, b_2, \dots, b_m$ be the points of $B$ sorted clockwise around $r$, where $m=\frac{n}{2}$, $b_1=b, b_m=b'$. Consider the Hamiltonian cycle $\HC{R'}=\{b_ib_{i+1}:1\le i< m\}\cup \{rb_1,rb_m\}$. Let $M_1(B')$ and $M_2(B')$ be the two edge-disjoint plane matchings in $B'$ obtained from $\HC{B'}$. Let $M_1=\M{1}{R'}\cup \M{1}{B'}$ and $M_2=\M{2}{R'}\cup \M{2}{B'}$. Clearly $M_1$ and $M_2$ are edge-disjoint plane matchings in $P$. Let $M_3=\Tangent{R}{B}$. As described in case (i), $M_3$ is a plane matching in $P$ and $M_3\cap (M_1\cup M_2)=\emptyset$. Therefore, $M_3$ is edge-disjoint from $M_1$ and $M_2$.
\end{itemize}
\end{proof}
As a direct consequence of Theorem~\ref{n-power2} and Theorem~\ref{3-matching-theorem} we have the following corollary.
\begin{corollary}
For a set $P$ of $n=2^i\cdot m$ points in general position in the plane with $m\ge 4$, at least $i+2$ plane matchings can be packed into $\Kn{(P)}$.
\end{corollary}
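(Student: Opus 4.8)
The plan is to distill from the proof of Theorem~\ref{n-power2} a single recursive inequality and then iterate it $i$ times, bottoming out on pieces small enough to invoke a direct bound. Concretely, I would first establish that whenever a line splits an even-sized set $Q$ into two even halves $R$ and $B$, one has $M(Q)\ge \min\{M(R),M(B)\}+1$. This is exactly the mechanism inside Theorem~\ref{n-power2}: take $t=\min\{M(R),M(B)\}$ edge-disjoint plane matchings on each side and pair them up; since $\CH{R}\cap\CH{B}=\emptyset$ (the hypothesis of Observation~\ref{partition-obs}) the $t$ unions are edge-disjoint plane matchings of $Q$ whose edges all stay strictly on one side of the line, while the bichromatic matching \Cut{R}{B} supplies one further plane matching using only edges that cross the line and is therefore edge-disjoint from all $t$ of them.

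With this inequality in hand I would prove the corollary by induction on $i$, separating the two parities of $m$. If $m$ is even, every repeated bisection of $P$ yields halves of even sizes $2^{i-1}m,2^{i-2}m,\dots,m$, so $i$ applications of the inequality reduce $P$ to pieces of size $m\ge 4$; each such piece carries at least two edge-disjoint plane matchings (from a minimum-weight Hamiltonian cycle), giving $M(P)\ge 2+i$. If $m$ is odd then $m\ge 5$ and, since $n$ is even, necessarily $i\ge 1$; here the bisection can be iterated only while the resulting halves stay even, i.e.\ down to pieces of size $2m$, which costs one bisection and hence yields only $i-1$ cut matchings. The deficit is recovered by Theorem~\ref{3-matching-theorem}: because $2m\ge 10\ge 8$, each size-$2m$ piece admits three edge-disjoint plane matchings, so $M(P)\ge 3+(i-1)=i+2$ again. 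Both cases deliver the claimed bound $i+2$.

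The step I expect to require the most care is verifying that all the matchings harvested at different depths of the recursion are mutually edge-disjoint. I would argue this geometrically: the cut matching produced at a node $S$ of the recursion uses only edges joining the two sides of $S$'s separating line, whereas every matching produced strictly below $S$ lies inside one of those two sides and every matching produced above $S$ crosses a higher separating line; since $S$ lies wholly on one side of each ancestor's line, no edge can be claimed by two different nodes, and the base pieces contribute only edges internal to themselves. Thus the global collection is edge-disjoint. The secondary subtlety is the parity bookkeeping above, which is precisely why both cited theorems are needed: Theorem~\ref{n-power2} furnishes the $+1$ per bisection, and Theorem~\ref{3-matching-theorem} furnishes the stronger base value of three that absorbs the single lost bisection when $m$ is odd.
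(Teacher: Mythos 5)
Your proposal is correct and follows essentially the same strategy as the paper: partition $P$ by vertical lines into pieces large enough to carry a base supply of matchings (three via Theorem~\ref{3-matching-theorem}, or two via a Hamiltonian cycle), then gain one additional matching per level of bisection using the \Cut{R}{B} mechanism from Theorem~\ref{n-power2}. The paper simply stops the bisection uniformly at pieces of size $2m\ge 8$ and takes $3+(i-1)$, whereas you additionally handle even $m$ by descending one level further to pieces of size $m$ with base value $2$; both give $i+2$.
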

\begin{proof}
 Partition $P$ by vertical lines, into $2^{i-1}$ point sets, each of size $2m$. By Theorem~\ref{3-matching-theorem}, at least three plane matchings can be packed into each set. Considering these sets as the base cases in Theorem~\ref{n-power2}, we obtain $i-1$ plane matchings between these sets. Thus, in total, $i+2$ plane matchings can be packed into $\Kn{(P)}$
\end{proof}

\begin{theorem}
For a set $P$ of $n$ points in general position in the plane, with $n$ even, at least $\lceil\log_2{n}\rceil-2$ plane matchings can be packed into $\Kn{(P)}$.
\end{theorem}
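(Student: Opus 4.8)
The plan is to prove the statement by strong induction on $n$, where for a point set of \emph{odd} size I extend the quantity $M(\cdot)$ to mean the maximum number of edge-disjoint plane \emph{maximum} (near-perfect) matchings; carrying this odd case alongside the even one is what lets the recursion close. The target throughout is $M(P)\ge\lceil\log_2 n\rceil-2$. Powers of two serve as a strong base: Theorem~\ref{n-power2} already gives $\log_2 n=\lceil\log_2 n\rceil$ there, strictly more than the target, and this surplus is exactly what will rescue the boundary cases below. For $n\le 16$ the target is at most $2$, so the trivial bound of $2$ (from a minimum-weight Hamiltonian cycle) together with Theorem~\ref{3-matching-theorem} settles every base case. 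For the inductive step ($n\ge 18$) I would split on the residue of $n$ modulo $4$.

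The easy case is $4\mid n$. A vertical line $\ell$ splits $P$ into $R$ and $B$ of equal \emph{even} size $n/2$. By the induction hypothesis each of $M(R),M(B)$ is at least $\lceil\log_2(n/2)\rceil-2=\lceil\log_2 n\rceil-3$, so Observation~\ref{partition-obs} yields that many plane matchings of $P$, every edge of which stays inside $R$ or inside $B$. Exactly as in the proof of Theorem~\ref{n-power2}, the bichromatic matching \Cut{R}{B} uses only edges crossing $\ell$ and is therefore edge-disjoint from all of them, supplying one more matching. Hence $M(P)\ge(\lceil\log_2 n\rceil-3)+1=\lceil\log_2 n\rceil-2$, as required.

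The hard case, and the step I expect to be the main obstacle, is $n\equiv 2\pmod 4$, where the equal halves $R$ and $B$ now have \emph{odd} size $n/2$ and admit no perfect matching of their own. The cross matching survives — a ham-sandwich cut passing through one point of each still produces \Cut{R}{B} — so the plan is again to harvest one free matching from the crossing edges and to obtain the remaining $\lceil\log_2(n/2)\rceil-2$ from the two halves. For this I would, following the crossing-tangent construction of Theorem~\ref{3-matching-theorem}, \emph{stitch}: combine a near-perfect matching of $R$ with one of $B$ and a single crossing edge joining their two exposed vertices into a perfect matching of $P$. The near-perfect matchings of an odd half are furnished by the induction hypothesis at odd size, which itself reduces to the even case by deleting one extreme point (a near-perfect matching missing that point is just a perfect matching of the remaining $n/2-1$ points). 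The delicate part — the real difficulty — is coordinating this stitching across the whole family: the exposed vertices must lie near $\ell$ and face each other so that each stitching edge crosses none of the interior edges, yet they must differ from matching to matching (so that the stitching edges are pairwise distinct and edge-disjoint from \Cut{R}{B}), which forces the odd-size hypothesis to be stated so that the exposed vertex is controllable rather than fixed. Finally I would check the arithmetic at the bracket boundaries: whenever a reduction lands on a size one more than a power of two, the smaller instance is a power of two, and there the stronger bound of Theorem~\ref{n-power2} absorbs the otherwise-lost unit, keeping the bound $\lceil\log_2 n\rceil-2$ intact in every case.
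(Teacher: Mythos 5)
Your easy case ($4\mid n$) is fine and matches the spirit of the paper's Theorem~\ref{n-power2}, but the case $n\equiv 2\pmod 4$ is a genuine gap, not just a deferred detail. You split into two equal odd halves and propose to stitch a near-perfect matching of $R$ to a near-perfect matching of $B$ via one crossing edge between the two exposed vertices. For this to work you need, simultaneously: (1) an odd-size induction hypothesis producing $\lceil\log_2(n/2)\rceil-2$ edge-disjoint near-perfect matchings in which the exposed vertex is \emph{controllable} and \emph{different for each matching}; (2) each exposed pair positioned so that the stitching edge crosses no interior edge of either half; and (3) all stitching edges distinct from each other and from \Cut{R}{B}. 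Your proposed source for the odd case --- delete one extreme point and take a perfect matching of the rest --- fixes the exposed vertex rather than making it controllable, so it cannot supply condition (1); and nothing in the proposal establishes (2) or (3). Since the odd case would recur at every level of the induction, the entire argument rests on this unproved stitching lemma, which is essentially as hard as the theorem itself.

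The paper avoids odd halves altogether, and that is the key idea you are missing. An even set of size $m$ with $4\nmid m$ is split into two \emph{even} parts of sizes $2\lfloor m/4\rfloor$ and $m-2\lfloor m/4\rfloor$ (differing by exactly $2$); the two surplus points on the larger side are the two extreme points of that part and are matched \emph{to each other}, so the surplus edge trivially avoids crossing the bichromatic matching \Min{L_u}{R_u} between the balanced remainders. A left-child/right-child convention for which side receives the surplus lets one verify that these surplus edges never reappear in a descendant's matching, which is the paper's Claim~2. The cost of the uneven split is that the recursion tree is no longer perfectly balanced; the paper's Claims~3 and~4 show that same-level subproblem sizes differ by at most $2$, hence all leaves lie in the last two levels and the number of usable levels is at least $\lceil\log_2 n\rceil-2$ --- which is precisely where the $-2$ in the statement comes from. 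If you want to salvage your outline, replacing the equal odd split by this uneven even split is the repair.
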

\begin{proof}
We describe how to pack a set $\mathcal{M}$ of $\lceil\log_2{n}\rceil-2$ plane perfect matchings into $\Kn{(P)}$. The construction consists of the following three main steps which we will describe in detail.

\begin{enumerate}
  \item Building a binary tree $T$.
  \item Assigning the points of $P$ to the leaves of $T$.
  \item Extracting $\mathcal{M}$ from $P$ using internal nodes of $T$.
\end{enumerate}

\begin{paragraph}{\em \small 1. Building the tree T.}
In this step we build a binary tree $T$ such that each node of $T$ stores an even number, and each internal node of $T$ has a left and a right child. For an internal node $u$, let \LC{u} and \RC{u} denote the left child and the right child of $u$, respectively. Given an even number $n$, we build $T$ in the following way:
\begin{itemize}
  \item The root of $T$ stores $n$.
  \item If a node of $T$ stores $2$, then that node is a leaf.
  \item For a node $u$ storing $m$, with $m$ even and $m\ge 4$, we store the following even numbers into \LC{u} and \RC{u}:
    \begin{itemize}
	\item If $m$ is divisible by $4$, we store $\frac{m}{2}$ in both \LC{u} and \RC{u}; see Figure~\ref{tree-construction-fig}(a).
	\item If $m$ is not divisible by $4$ and $u$ is the root or the left child of its parent then we store $2\lfloor\frac{m}{4}\rfloor$ in \LC{u} and $m-2\lfloor\frac{m}{4}\rfloor$ in \RC{u}; see Figure~\ref{tree-construction-fig}(b).
	\item If $m$ is not divisible by $4$ and $u$ is the right child of its parent then we store $m-2\lfloor\frac{m}{4}\rfloor$ in \LC{u} and $2\lfloor\frac{m}{4}\rfloor$ in \RC{u}; see Figure~\ref{tree-construction-fig}(c).
    \end{itemize}
Note that in the last two cases\textemdash where $m$ is not divisible by four\textemdash the absolute difference between the values stored in \LC{u} and \RC{u} is exactly 2. See Figure~\ref{matching-example-fig}.
\end{itemize}
\end{paragraph}

\begin{figure}[htb]
  \centering
\setlength{\tabcolsep}{0in}
  $\begin{tabular}{ccc}
\multicolumn{1}{m{.33\columnwidth}}{\centering\includegraphics[width=.16\columnwidth]{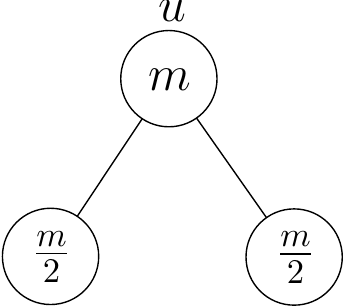}}&
\multicolumn{1}{m{.33\columnwidth}}{\centering\includegraphics[width=.2\columnwidth]{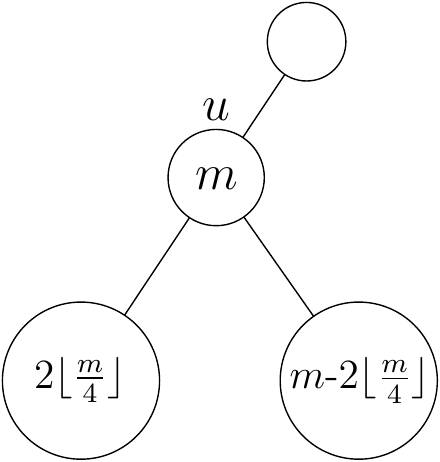}}
&\multicolumn{1}{m{.33\columnwidth}}{\centering\includegraphics[width=.2\columnwidth]{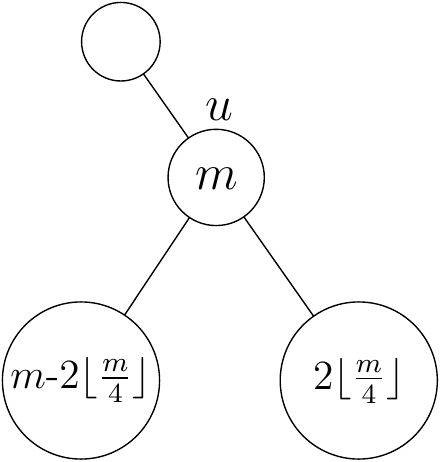}} \\
(a) & (b)&(c)
\end{tabular}$
  \caption{(a) $m$ is divisible by four, (b) $m$ is not divisible by four and $u$ is a left child, and (c) $m$ is not divisible by four and $u$ is a right child.}
\label{tree-construction-fig}
\end{figure}

\begin{paragraph}{\em \small 2. Assigning the points to the leaves of the tree.}
In this step we describe how to assign the points of $P$, in pairs, to the leaves of $T$. We may assume without loss of generality that no two points of $P$ have the same $x$-coordinate. Sort the points of $P$ in a increasing order of their $x$-coordinate.
Assign the first two points to the leftmost leaf, the next two points to the second leftmost leaf, and so on. Note that $T$ has $\frac{n}{2}$ leaves, and hence all the points of $P$ are assigned to the leaves of $T$. See Figure~\ref{matching-example-fig}.  
\end{paragraph}

\begin{paragraph}{\em \small 3. Extracting the matchings.}
Let $L$ be the number of edges in a shortest path from the root to any leaf in $T$; in Figure~\ref{matching-example-fig}, $L=3$. For an internal node $u\in T$, let $T_u$ be the subtree rooted at $u$. Let $L_u$ and $R_u$ be the set of points assigned to the left and right subtrees of $T_u$, respectively, and let $P_u=L_u\cup R_u$. Consider the points in $L_u$ as red and the points in $R_u$ as blue. Since the points in $L_u$ have smaller $x$-coordinates than the points in $R_u$, we say that $L_u$ and $R_u$ are separated by a vertical line $\ell(u)$. For each internal node $u$ where $u$ is in level $0\le i< L$ in $T$\textemdash assuming the root is in level $0$\textemdash we construct a plane perfect matching $M_u$ in $P_u$ in the following way. Let $m$ be the even number stored at $u$.
\begin{itemize}
  \item If $m$ is divisible by $4$ (Figure~\ref{tree-construction-fig}(a)), then let $M_u=\Min{L_u}{R_u}$; see Section~\ref{colored-matching-section}. Since $|L_u|=|R_u|$, $M_u$ is a plane perfect matching for $P_u$. See vertices $u_2, u_3$ in Figure~\ref{matching-example-fig}.

  \item If $m$ is not divisible by $4$ and $u$ is the root or a left child (Figure~\ref{tree-construction-fig}(b)), then $|R_u|-|L_u|=2$. Let $a,b$ be the two points assigned to the rightmost leaf in $T_u$, and let $M_u=\{ab\}\cup\Min{L_u}{R_u-\{a,b\}}$. Since $|L_u|=|R_u-\{a,b\}|$, $M_u$ is a perfect matching in $P_u$. In addition, $a$ and $b$ are the two rightmost points in $P_u$, thus, $ab$ does not intersect any edge in $\Min{L_u}{R_u-\{a,b\}}$, and hence $M_u$ is plane. See vertices $u_0, u_1, u_5$ in Figure~\ref{matching-example-fig}.

  \item If $m$ is not divisible by $4$ and $u$ is a right child (Figure~\ref{tree-construction-fig}(c)), then $|L_u|-|R_u|=2$. Let $a,b$ be the two points assigned to the leftmost leaf in $T_u$ and let $M_u=\{ab\}\cup\Min{L_u-\{a,b\}}{R_u}$. Since $|L_u-\{a,b\}|=|R_u|$, $M_u$ is a perfect matching in $P_u$. In addition, $a$ and $b$ are the two leftmost points in $P_u$, thus, $ab$ does not intersect any edge in $\Min{L_u-\{a,b\}}{R_u}$, and hence $M_u$ is plane. See vertices $u_4, u_6$ in Figure~\ref{matching-example-fig}.
\end{itemize}

For each $i$, where $0\le i< L$, let $S(i)$ be the set of vertices of $T$ in level $i$; see Figure~\ref{matching-example-fig}. For each level $i$ let $M_i=\bigcup_{u\in S(i)}{M_u}$. Let $\mathcal{M}=\{M_i:0\le i< L\}$. 
In the rest of the proof, we show that $\mathcal{M}$ contains $\lceil\log_2{n}\rceil-2$ edge-disjoint plane matchings in $P$.
\end{paragraph}

\begin{figure}[htb]
  \centering
  \includegraphics[width=.9\columnwidth]{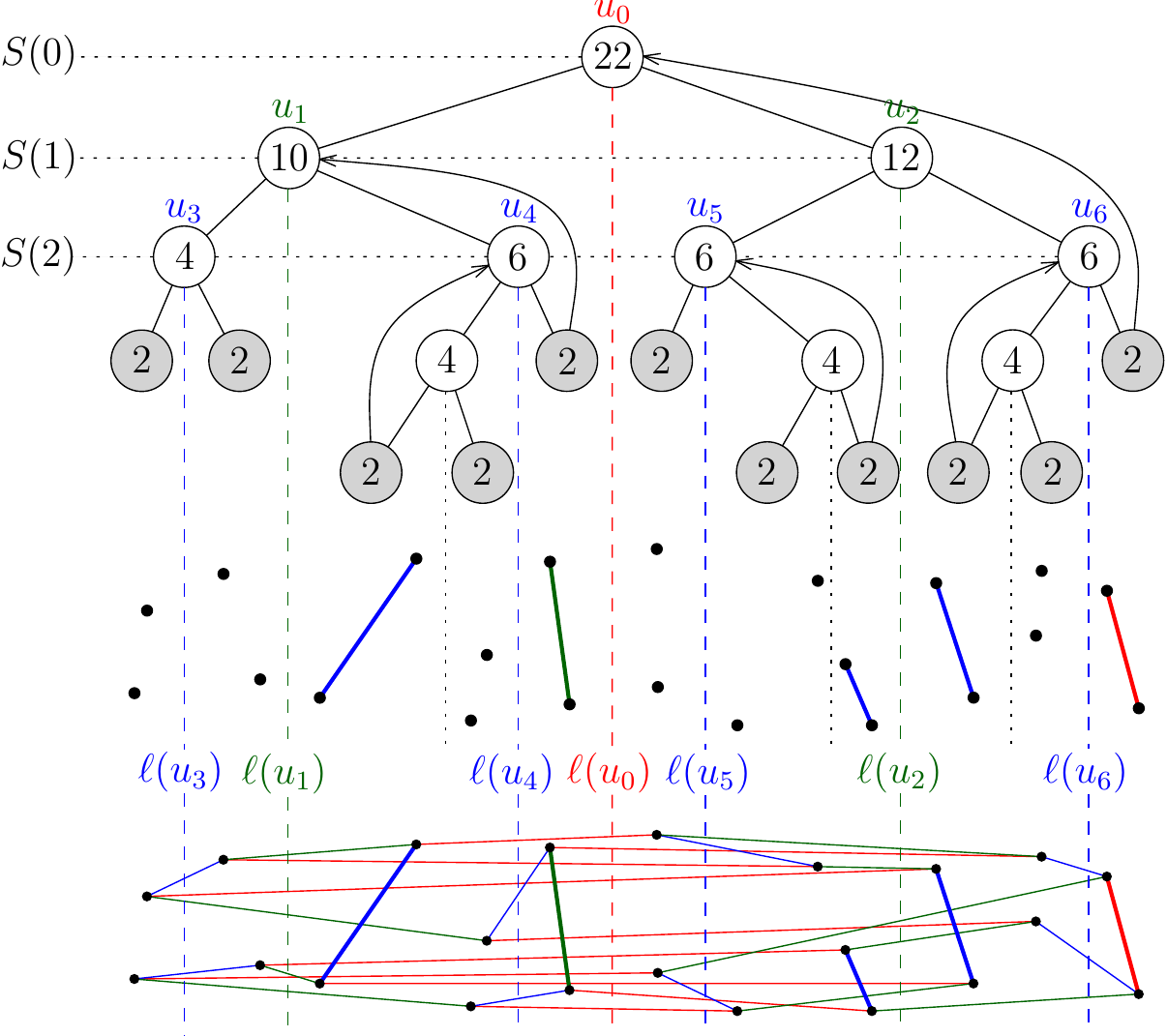}
 \caption{The points in $P$ are assigned, in pairs, to the leaves of $T$, from left to right. The point set $P$ with the edge-disjoint plane matchings is shown as well. $M_0$ contains the bold red edge and the red edges crossing $\ell(u_0)$. $M_1$ contains the bold green edge and the green edges crossing $\ell(u_1), \ell(u_2)$. $M_2$ contains the bold blue edges and the blue edges crossing $\ell(u_3), \allowbreak \ell(u_4), \allowbreak\ell(u_5), \allowbreak\ell(u_6)$.}
  \label{matching-example-fig}
\end{figure}

\vspace{10pt}
{\em {\bf Claim 1:} For each $i$, where $0\le i< L$, $M_i$ is a plane perfect matching in $P$.} 
Note that if $u$ is the root of the tree, then $P_u=P$. In addition, for each internal node $u$ (including the root), $\{L_u,R_u\}$ is a partition of the point set $P_u$. This implies that in each level $i$ of the tree, where $0\le i <L$, we have $P=\bigcup_{u\in S(i)}{P_u}$. Moreover, the points in $P$ are assigned to the leaves of $T$ in non-decreasing order of their $x$-coordinate. Thus, $\mathcal{P}_i=\{P_u: u\in S(i)\}$ is a partition of the point set $P$; the sets $P_u$ with $u\in S(i)$ are separated by vertical lines; see Figure~\ref{matching-example-fig}. Therefore, $M_i$ is a perfect plane matching in $P$; which proves the claim.
\vspace{10pt}

\vspace{10pt}
{\em {\bf Claim 2:} For all $M_i, M_j\in \mathcal{M}$, where $0\le i,j<L$ and $i\neq j$, $M_i\cap M_j=\emptyset$.}
In order to prove that $M_i$ and $M_j$ are edge-disjoint, we show that for each pair of distinct internal nodes $u$ and $v$, $M_u\cap M_v=\emptyset$. If $u$ and $v$ are in the same level, then $P_u$ and $P_v$ are separated by $\ell(u)$, thus, $M_u$ and $M_v$ do not share any edge. Thus, assume that $u\in S(i)$ and $v\in S(j)$ such that $0\le i, j< L$, $i\neq j$, and w.l.o.g. assume that $i< j$. If $v\notin T_u$, then $P_u$ and $P_v$ are separated by line $\ell(w)$, where $w$ is the lowest common ancestor of $u$ and $v$; this implies that $M_u$ and $M_v$ do not share any edge. Therefore, assume that $v\in T_u$, and w.l.o.g. assume that $v$ is in the left subtree of $T_u$. Thus, $P_v$\textemdash and consequently $M_v$\textemdash is to the left of $\ell(u)$. The case where $v$ is in the right subtree of $T_u$ is symmetric. Let $m\ge 4$ be the number stored at $u$. We differentiate between three cases:

\begin{itemize}
  \item If $m$ is divisible by $4$, then all the edges in $M_u$ cross $\ell(u)$, while the edges in $M_v$ are to the left of $\ell(u)$. This implies that $M_u$ and $M_v$ are disjoint. 

  \item If $m$ is not divisible by $4$ and $u$ is the root or a left child, then all the edges of $M_u$ cross $\ell(u)$, except the rightmost edge $ab$ which is to the right of $\ell(u)$. Since $M_v$ is to the left of $\ell(u)$, it follows that $M_u$ and $M_v$ are disjoint. 

  \item If $m$ is not divisible by $4$ and $u$ is a right child, then all the edges of $M_u$ cross $\ell(u)$, except the leftmost edge $ab$. If $a,b\notin P_v$, then $ab\notin M_v$, and hence $M_u$ and $M_v$ are disjoint. If $a,b\in P_v$ then $v$ is the left child of its parent and all the edges in $M_v$ cross $\ell(v)$ (possibly except one edge which is to the right of $\ell(v)$), while $ab$ is to the left of $\ell(v)$. Therefore $M_u$ and $M_v$ do not share any edge. This completes the proof of the claim.
\end{itemize}
\vspace{10pt}

\vspace{10pt}
{\em {\bf Claim 3:} For every two nodes $u$ and $v$ in the same level of $T$, storing $m$ and $m'$, respectively, $|m-m'| \le 2$.}

We prove the claim inductively for each level $i$ of $T$. For the base case, where $i=1$: (a) if $n$ is divisible by four, then both $u$ and $v$ store $\frac{n}{2}$ and the claim holds, (b) if $n$ is not divisible by four then $u$ stores $2\lfloor\frac{n}{4}\rfloor$ and $v$ stores $n-2\lfloor\frac{n}{4}\rfloor$; as $0\le n-2\lfloor\frac{n}{4}\rfloor-2\lfloor\frac{n}{4}\rfloor\le 2$, the claim holds for $i=1$. 
Now we show that if the claim is true for the $i$th level of $T$, then the claim is true for the $(i+1)$th level of $T$.
Let $u$ and $v$, storing $m$ and $m'$, respectively, be in the $i$th level of $T$. By the induction hypothesis, the claim holds for the $i$th level, i.e., $|m-m'|\le 2$. We prove that the claim holds for the $(i+1)$th level of $T$, i.e., for the children of $u$ and $v$. Since $m$ and $m'$ are even numbers, $|m-m'|\in\{0,2\}$. If $|m-m'|=0$, then $m=m'$, and by a similar argument as in the base case, the claim holds for the children of $u$ and $v$. If $|m-m'|=2$, then w.l.o.g. assume that $m'=m+2$. Let $\alpha$ be the smallest number and $\beta$ be the largest number stored at the children of $u$ and $v$ (which are at the $(i+1)$th level). We show that $\beta-\alpha \le 2$. It is obvious that $\alpha= 2\lfloor\frac{m}{4}\rfloor$ and $\beta= m'-2\lfloor\frac{m'}{4}\rfloor$. Thus,
\begin{align}
\label{alpha-beta}
\beta-\alpha & = m'-2\lfloor\frac{m'}{4}\rfloor-2\lfloor\frac{m}{4}\rfloor\nonumber\\
       & = m+2-2\lfloor\frac{m+2}{4}\rfloor-2\lfloor\frac{m}{4}\rfloor 
\end{align}
Now, we differentiate between two cases, where $m=4k$ or $m=4k+2$. If $m=4k$, then by Equation~\ref{alpha-beta},
\begin{align}
\beta-\alpha & = 4k+2-2\lfloor\frac{4k+2}{4}\rfloor-2\lfloor\frac{4k}{4}\rfloor\nonumber\\
       & = 4k+2-2k-2k\nonumber\\
	&= 2.\nonumber
\end{align}
If $m=4k+2$, then by Equation~\ref{alpha-beta},
\begin{align}
\beta-\alpha & = 4k+4-2\lfloor\frac{4k+4}{4}\rfloor-2\lfloor\frac{4k+2}{4}\rfloor\nonumber\\
       & = 4k+4-2(k+1)-2k\nonumber\\
	&= 2\nonumber
\end{align}
which completes the proof of the claim.
\vspace{10pt}

\vspace{10pt}
{\em {\bf Claim 4:} $L\ge\lceil\log_2{n}\rceil-2$.}

It follows from Claim 3 that all the leaves of $T$ are in the last two levels. Since $T$ has $\frac{n}{2}$ leaves, $T$ has $n-1$ nodes. Recall that $L$ is the number of edges in a shortest path from the root to any leaf in $T$. Thus, $L\ge h-1$, where $h$ is the height of $T$. To give a lower bound on $h$, one may assume that the last level of $T$ is also full, thus,

$$n-1\le 2^0+2^1+2^2+\dots+2^h\le 2^{h+1}-1$$
and hence, $h\ge \log_2{n} -1$. Therefore, $L\ge h-1\ge \log_2{n} -2$.  Since $L$ is an integer, $L\ge\lceil\log_2{n}\rceil-2$; which proves the claim.

\vspace{20pt}

Claim 1 and Claim 2 imply that $\mathcal{M}$ contains $L$ edge-disjoint plane perfect matchings. Claim 4 implies that $L\ge\lceil\log_2{n}\rceil-2$, which proves the statement of the theorem.
\end{proof}

\subsection{Non-crossing Plane Matchings}
\label{non-crossing-matching-section}
In this section we consider the problem of packing plane matchings into $\Kn{(P)}$ such that any two different matchings in the packing are non-crossing.
Two edge-disjoint plane matchings $M_1$ and $M_2$ are {\em non-crossing}, if no edge in $M_1$ crosses any edge in $M_2$, and vice versa. For a set $P$ of $n$ points in general position in the plane, with $n$ even, we give tight lower and upper bounds on the number of pairwise non-crossing plane perfect matchings that can be packed into $\Kn{(P)}$. 

\begin{lemma}
\label{5-non-crossing}
For a set $P$ of $n$ points in general position in the plane, with $n$ even, at most five pairwise non-crossing plane matchings can be packed into $\Kn{(P)}$.
\end{lemma}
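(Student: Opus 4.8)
The plan is to observe that the union of all the packed matchings is a single simple plane graph, and then apply the standard Euler-formula edge bound $|E|\le 3n-6$. The entire argument reduces to an edge count once this observation is in place.

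First I would let $M_1,\dots,M_k$ be pairwise non-crossing plane perfect matchings packed into $\Kn{(P)}$, and form their union $G=\bigcup_{i=1}^{k} M_i$ as a geometric graph on vertex set $P$. The goal is to show $G$ is a simple plane graph. It is plane because no two of its edges cross: within a single matching $M_i$ no two edges cross (each $M_i$ is a plane matching), and for $i\neq j$ no edge of $M_i$ crosses any edge of $M_j$ (the matchings are pairwise non-crossing by assumption). It is simple because the matchings are packed, i.e.\ pairwise edge-disjoint, and each matching consists of distinct edges with no loops; hence the $k\cdot\frac{n}{2}$ edges contributed by the $M_i$ are all distinct. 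Thus $G$ is a simple plane graph with exactly $k\cdot\frac{n}{2}$ edges.

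Next I would invoke the fact that a simple plane (equivalently planar) graph on $n\ge 3$ vertices has at most $3n-6$ edges. Combining this with the edge count gives $k\cdot\frac{n}{2}\le 3n-6$, and therefore $k\le 6-\frac{12}{n}<6$. Since $k$ is an integer, $k\le 5$, which is the claimed bound.

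I do not expect a genuine obstacle in this proof; the only step requiring care is confirming that the two hypotheses on the matchings—that each is plane and that distinct ones are non-crossing—together rule out \emph{all} crossings in the union, which is precisely what certifies that $G$ is a plane graph. Everything else is the routine planar edge bound. I would also remark that the inequality $|E|\le 3n-6$ requires $n\ge 3$, so that for the trivial small case $n=2$ the statement holds immediately, and the counting argument applies for all even $n\ge 4$.
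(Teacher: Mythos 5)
Your proof is correct and takes essentially the same route as the paper: both form the union $G$ of the $k$ matchings, observe that it is a simple plane graph (being a union of pairwise non-crossing, edge-disjoint plane matchings), and then apply a standard consequence of Euler's formula. The only cosmetic difference is that you invoke the edge bound $|E|\le 3n-6$ on the $k\cdot\frac{n}{2}$ edges, whereas the paper notes that $G$ is $k$-regular and that every simple plane graph has a vertex of degree at most $5$; these are equivalent corollaries of the same fact.
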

\begin{proof}
 Let $\{M_1,M_2,\dots,M_m\}$ be any maximal set of non-crossing edge-disjoint plane matchings in $\Kn{(P)}$. Let $M=M_1\cup M_2\dots \cup M_m$, and let $G$ be the induced subgraph of $\Kn{(P)}$ by $M$. It is obvious that $G$ is an $m$-regular graph. Since $M_1,\dots,M_m$ are plane and pairwise non-crossing, $G$ is an $m$-regular plane graph. It is well known that every plane graph has a vertex of degree at most 5. Thus, $G$ has a vertex of degree at most five and hence $m\le 5$; which implies at most five pairwise non-crossing plane matchings can be packed into $\Kn{(P)}$.
\end{proof}

\begin{figure}[htb]
  \centering
  \includegraphics[width=.4\columnwidth]{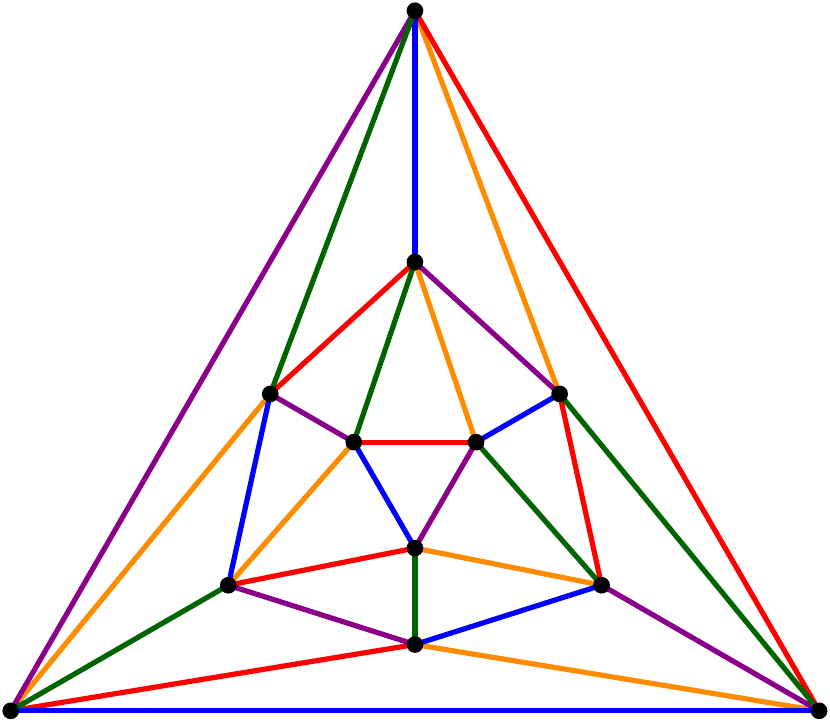}
 \caption{A point set with five non-crossing edge-disjoint plane perfect matchings.}
  \label{5-regular-fig}
\end{figure}

Figure~\ref{5-regular-fig} shows a $5$-regular geometric graph on a set of 12 points in the plane which contains five non-crossing edge-disjoint plane matchings. In \cite{Hasheminezhad2011}, the authors showed how to generate an infinite family of 5-regular planar graphs using the graph in Figure~\ref{5-regular-fig}. By an extension of the five matchings shown in Figure~\ref{5-regular-fig}, five non-crossing matchings for this family of graphs is obtained. Thus, the bound provided by Lemma~\ref{5-non-crossing} is tight.  

It is obvious that if $P$ contains at least four points, the minimum length Hamiltonian cycle in $\Kn{(P)}$ contains two non-crossing edge-disjoint plane matchings. In the following lemma we show that there exist point sets which contain at most two non-crossing edge-disjoint plane matchings.

\begin{observation}
\label{even-cycle-obs}
The union of two edge-disjoint perfect matchings in any graph is a set of even cycles.
\end{observation}

\begin{lemma}
 \label{2-non-crossing}
For a set $P$ of $n\ge 4$ points in convex position in the plane, with $n$ even, at most two pairwise non-crossing plane matchings can be packed into $\Kn{(P)}$.
\end{lemma}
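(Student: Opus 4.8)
The plan is to pass to the \emph{union} of all the matchings and exploit the fact that, for points in convex position, any non-crossing geometric graph sits inside a triangulation of the convex polygon, and such triangulations always have a vertex of very small degree (an ``ear tip''). This converts the combinatorial regularity of the union into the sharp bound.

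First I would set up the counting object. Suppose $M_1,\dots,M_m$ are pairwise non-crossing, edge-disjoint plane matchings packed into $\Kn{(P)}$. Because each $M_i$ is itself non-crossing and the $M_i$ are pairwise non-crossing, the whole edge set $E=M_1\cup\cdots\cup M_m$ is a non-crossing set of straight-line chords of the convex polygon \CH{P}. In particular $E$ is realized as an outerplanar (geometric) graph, and since each $M_i$ is a perfect matching, every vertex of $P$ is incident to exactly $m$ edges of $E$ (edge-disjointness makes these $m$ edges distinct).

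Next comes the key geometric step: extend $E$ to a full triangulation $T$ of $P$. Any non-crossing set of segments with endpoints in a point set in convex position can be completed to a triangulation (greedily insert non-crossing diagonals until every bounded face is a triangle), so $E\subseteq E(T)$ and $T$ also contains all the edges of \CH{P}. A triangulation of a convex polygon on $n\ge 4$ vertices has $n-2$ triangles whose dual graph is a tree on $n-2\ge 2$ nodes; a tree has at least two leaves, and each leaf is an ``ear'' triangle bounded by two hull edges and one diagonal. The apex $v$ of such an ear is joined in $T$ only to its two hull neighbours, so $\deg_T(v)=2$. Fix one such ear tip $v$.

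Finally I would derive the bound at $v$. Each $M_i$ matches $v$ by exactly one edge $e_i\in M_i\subseteq E\subseteq E(T)$, and by edge-disjointness the edges $e_1,\dots,e_m$ are pairwise distinct and all incident to $v$. Hence $m\le\deg_T(v)=2$, which is exactly the claim; and since the minimum-length Hamiltonian cycle already yields two non-crossing edge-disjoint matchings (as observed just before the lemma), the bound is tight. The only point needing care---the main obstacle---is the geometric extension step together with the existence of an ear: one must justify that a non-crossing partial edge set really embeds in a triangulation and that a convex-polygon triangulation always has a degree-$2$ vertex. Both are standard, but they are essential, since a naive count of hull edges via Lemma~\ref{two-convex-edges} only gives $m\le n/2$; it is the outerplanar/ear structure that forces $m\le 2$.
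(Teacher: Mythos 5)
Your proof is correct, but it takes a genuinely different route from the paper's. The paper argues by contradiction with three matchings: it decomposes $M_1\cup M_2$ into even convex cycles (Observation~\ref{even-cycle-obs}), locates a cycle with at most one non-hull edge, and then applies Lemma~\ref{two-convex-edges} to the restriction of $M_3$ to that cycle's vertices in a two-case analysis to force a repeated hull edge. You instead observe that the union $E=M_1\cup\cdots\cup M_m$ is an $m$-regular non-crossing (hence outerplanar) geometric graph on convex-position points, complete it to a triangulation of the convex polygon, and use the existence of an ear whose apex has degree exactly $2$ in the triangulation to conclude $m\le\deg_T(v)=2$. Both extension steps you flag are indeed standard and hold here (any non-crossing chord set of a convex polygon extends greedily to a triangulation containing all hull edges, and for $n\ge 4$ the dual tree has a leaf whose apex sees no other diagonal by convexity). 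What your approach buys is uniformity: it is the exact outerplanar analogue of the paper's own proof of Lemma~\ref{5-non-crossing}, where the union of non-crossing matchings is an $m$-regular planar graph and the minimum-degree-$5$ bound gives $m\le 5$; here minimum degree at most $2$ for outerplanar graphs gives $m\le 2$, with no case analysis. What the paper's approach buys is that it reuses machinery already established (Lemma~\ref{two-convex-edges}) rather than importing the triangulation/ear facts, keeping the section self-contained.
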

\begin{proof}
 The proof is by contradiction. Consider three pairwise non-crossing plane matchings $M_1,M_2,M_3$. Let $\mathcal{C}$ be the union of $M_1$ and $M_2$. By Observation~\ref{even-cycle-obs}, $\mathcal{C}$ is a union of cycles, say $C_1,\dots, C_k$, each with an even number of points. Observe that $C_i$, where $1\le i\le k$, is plane and convex, and has at least 4 vertices. 
In addition, the regions enclosed by $C_1,C_2,\dots,C_k$ are pairwise disjoint.

If $k=1$, then $C_1=\CH{P}$. By Lemma~\ref{two-convex-edges}, $M_3$ contains two edges of $\CH{P}$, which contradicts that $M_1$, $M_2$, and $M_3$ are pairwise disjoint. 

From now on, we assume that $k\ge 2$. Then, each $C_i$ contains an edge that is not an edge of \CH{P}.

\vspace{10pt}
{\em {\bf Claim 1:} There exists a cycle $C_i$ such that at most one edge of $C_i$ is not an edge of \CH{P}.}

Let $pq$ be an edge of $C_1\cup C_2 \cup \dots\cup C_k$ such that $pq$ is not an edge of \CH{P} and $\min(|H(p,q)\cap P|,|H'(p,q)\cap P|)$ is minimum, where $H(p,q)$ and $H'(p,q)$ are the two half planes defined by a line through $p$ and $q$. The cycle containing $pq$ satisfies the statement in the claim.
\vspace{10pt}

Let $C_i$ be a cycle in $\mathcal{C}$ which satisfies the statement of Claim 1. Since $M_1$, $M_2$ and $M_3$ are pairwise non-crossing, none of the edges in $M_3$ intersect $pq$. Thus, in $M_3$ we have one of the following two cases:
\begin{itemize}
 \item Both $p$ and $q$ are matched to the points of $C_i$. Let $P'$ be the points of $C_i$. Note that $|P'|$ is an even number and $|P'|\ge 4$. In addition $C_i=\CH{P'}$. Let $M'_3$ be the edges in $M_3$ which are induced by the points in $P'$. By Lemma~\ref{two-convex-edges}, $M'_3$ contains two edges of \CH{P'}. Since $C_i=\CH{P'}$, both of these edges belong to $C_i$. This contradicts the fact that $M_1$, $M_2$, $M_3$ are pairwise edge-disjoint.
 \item None of $p$ and $q$ are matched to the points of $C_i$. Let $P'$ be the points of $C_i$ except $p$ and $q$. Note that $|P'|$ is an even number. Let $M'_3$ be the edges in $M_3$ which are induced by the points in $P'$. If $|P'|=2$, then the only edge in $M'_3$ is an edge of $C_i$. If $|P'|\ge 4$, then by Lemma~\ref{two-convex-edges}, $M'_3$ contains two edges of \CH{P'}. At least one of these edges belong to $C_i$. Either case contradicts the fact that $M_1$, $M_2$, $M_3$ are pairwise edge-disjoint.
\end{itemize}
\end{proof}
We conclude this section with the following theorem.
\begin{theorem}
For a set $P$ of $n\ge 4$ points in general position in the plane, with $n$ even, at least two and at most five pairwise non-crossing plane matchings can be packed into $\Kn{(P)}$. These bounds are tight.
\end{theorem}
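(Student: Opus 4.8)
The plan is to assemble this final theorem directly from the results already proved in the excerpt, since all four bounds have essentially been established and what remains is to certify tightness in each direction. The upper bound of five comes immediately from Lemma~\ref{5-non-crossing}, which applies to every point set in general position. The lower bound of two comes from the observation made just before Lemma~\ref{2-non-crossing}: for any $P$ with $n\ge 4$, a minimum-length Hamiltonian cycle in $\Kn{(P)}$ is plane and decomposes into two edge-disjoint matchings that, being drawn as a single simple cycle, cannot cross one another. So the inequality ``at least two and at most five'' holds for all $P$.

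For tightness I would argue the two endpoints separately. The upper bound of five is achieved by the explicit $5$-regular planar construction discussed after Lemma~\ref{5-non-crossing} (the $12$-point configuration in Figure~\ref{5-regular-fig}, extended via the infinite family of \cite{Hasheminezhad2011}), so there exists a point set for which exactly five pairwise non-crossing plane matchings can be packed. The lower bound of two is shown to be best possible by Lemma~\ref{2-non-crossing}: when $P$ is in convex position with $n\ge 4$, at most two pairwise non-crossing plane matchings fit, yet at least two always do, so convex point sets realize the bound two exactly. Combining these witnesses shows that neither the constant $2$ nor the constant $5$ can be improved.

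Concretely I would write: by Lemma~\ref{5-non-crossing} at most five such matchings can be packed, and a minimum-length Hamiltonian cycle supplies two non-crossing edge-disjoint plane matchings whenever $n\ge4$, proving the stated range holds for every $P$. For tightness, the configuration of Figure~\ref{5-regular-fig} (and its infinite extension) attains five, while Lemma~\ref{2-non-crossing} shows any convex point set attains only two; hence both bounds are tight.

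Since every ingredient is already in hand, there is no genuine obstacle here — the theorem is purely a packaging statement. The only point requiring a sentence of care is the claim that the two matchings extracted from a minimum Hamiltonian cycle are genuinely \emph{non-crossing} (not merely edge-disjoint): this follows because the cycle is a simple closed plane curve, so its edges pairwise avoid crossings, and hence in particular no edge of one matching crosses an edge of the other. With that remark the proof is complete in a few lines.
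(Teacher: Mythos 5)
Your proposal is correct and follows exactly the route the paper intends: the upper bound from Lemma~\ref{5-non-crossing}, the lower bound from the two non-crossing matchings inside a minimum-length plane Hamiltonian cycle, tightness of five from the configuration of Figure~\ref{5-regular-fig} and its infinite extension, and tightness of two from Lemma~\ref{2-non-crossing} for convex position. The paper states this theorem as a summary of the section without a separate written proof, and your assembly of the ingredients, including the remark that edges of a simple plane cycle cannot cross, is precisely what is required.
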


\section{Matching Removal Persistency}
\label{persistency-section}
In this section we define the matching persistency of a graph. A graph $G$ is {\em matching persistent} if by removing any perfect matching $M$ from $G$, the resulting graph, $G-M$, has a perfect matching. We define the {\em matching persistency} of $G$, denoted by $mp(G)$, as the size of the smallest set $\mathcal{M}$ of edge-disjoint perfect matchings that can be removed from $G$ such that $G-\mathcal{M}$ does not have any perfect matching. 
In other words, if $mp(G)=k$, then

\begin{enumerate}
 \item by removing an arbitrary set of $k-1$ edge-disjoint perfect matchings from $G$, the resulting graph still contains a perfect matching, and 
\item there exists a set of $k$ edge-disjoint perfect matchings such that by removing these matchings from $G$, the resulting graph does not have any perfect matching.
\end{enumerate}

In particular, $G$ is matching persistent iff $mp(G)\ge 2$.

\begin{figure}[htb]
  \centering
\setlength{\tabcolsep}{0in}
  $\begin{tabular}{cc}
\multicolumn{1}{m{.5\columnwidth}}{\centering\includegraphics[width=.3\columnwidth]{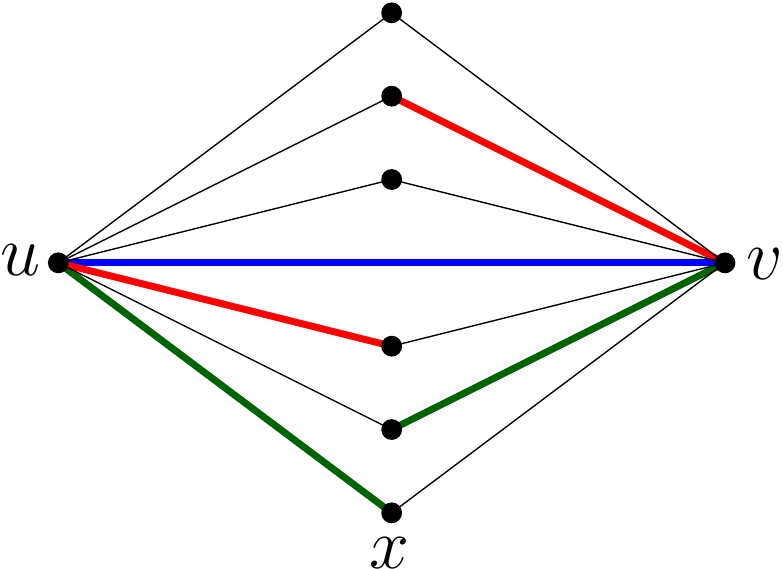}}&
\multicolumn{1}{m{.5\columnwidth}}{\centering\includegraphics[width=.15\columnwidth]{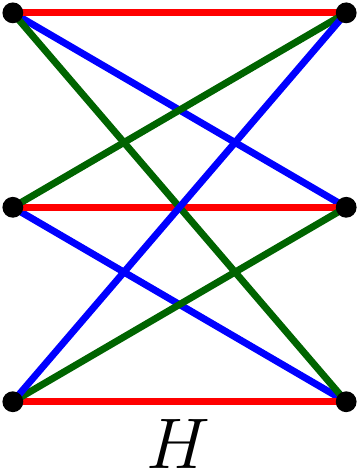}}\\
(a) & (b)
\end{tabular}$
  \caption{(a) By removing any matching (red, green, or blue) from $K_n$, at most two paths between $u$ and $v$ disappear. (b) The edges of $H$ are partitioned to $\frac{n}{2}$ perfect matchings, thus, $K_n-H$ does not have any perfect matching.}
\label{path-matching}
\end{figure}

\begin{lemma}
\label{regularity-connectivity}
Let $K_n$ be a complete graph with $n$ vertices, where $n$ is even, and let $\mathcal{M}$ be a set of $k$ edge-disjoint perfect matchings in $K_n$. Then, $K_n-\mathcal{M}$ is an $(n-1-k)$-regular graph which is $(n-1-2k)$-connected.
\end{lemma}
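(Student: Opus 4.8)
The plan is to prove the two assertions separately: first regularity, then connectivity.

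For \textbf{regularity}, the argument is immediate. In $K_n$ every vertex has degree $n-1$. Each perfect matching in $\mathcal{M}$ is a $1$-factor, so it contributes exactly one edge at every vertex. Since the $k$ matchings are pairwise edge-disjoint, removing all of them deletes exactly $k$ edges incident to each vertex. Hence every vertex of $K_n-\mathcal{M}$ has degree $(n-1)-k$, so $K_n-\mathcal{M}$ is $(n-1-k)$-regular.

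For \textbf{connectivity}, I would show that between any two vertices $u$ and $v$ there remain at least $n-1-2k$ edge-disjoint paths, and then invoke Menger's theorem to conclude $(n-1-2k)$-connectivity. In $K_n$ there are $n-1$ internally-disjoint $u$--$v$ paths: the direct edge $uv$, and for each of the other $n-2$ vertices $w$, the two-edge path $u$--$w$--$v$ (as suggested by Figure~\ref{path-matching}(a)). These $n-1$ paths are internally vertex-disjoint and hence edge-disjoint. The key observation is how many of these paths a single removed perfect matching $M$ can destroy. A path disappears only if $M$ contains one of its edges. Since $M$ is a matching, $M$ contains at most one edge incident to $u$ and at most one edge incident to $v$. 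The edge incident to $u$ in $M$ lies on exactly one of our paths (either $uv$ itself, or $u$--$w$--$v$ for the unique $w$ matched to $u$), and likewise the edge incident to $v$ lies on exactly one path. Therefore each removed matching destroys at most two of the $n-1$ paths, so removing all $k$ matchings destroys at most $2k$ of them, leaving at least $n-1-2k$ edge-disjoint $u$--$v$ paths intact. As this holds for every pair $u,v$, Menger's theorem gives that $K_n-\mathcal{M}$ is $(n-1-2k)$-connected.

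The main subtlety to get right is the counting in the connectivity bound: one must argue carefully that a single matching removes at most \emph{two} (not more) of the canonical paths, which relies on the fact that a matching covers each of $u$ and $v$ by at most one edge and that each such edge lies on a distinct path. The regularity part is routine; the connectivity part is where the factor of $2k$ originates and where Menger's theorem does the real work.
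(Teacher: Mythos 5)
Your proof is correct and follows essentially the same route as the paper's: the same trivial degree count for regularity, and the same family of $n-1$ canonical $u$--$v$ paths (the edge $uv$ plus the two-edge paths through each other vertex), with the observation that each removed matching can kill at most two of them because it has at most one edge at $u$ and one at $v$. Your explicit appeal to Menger's theorem and the careful accounting of which path each matching edge can lie on only make precise what the paper leaves implicit.
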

\begin{proof}
The regularity is trivial, because $K_n$ is $(n-1)$-regular and every vertex has degree $k$ in $\mathcal{M}$, thus, $K_n-\mathcal{M}$ is an $(n-1-k)$-regular graph. Now we prove the connectivity. Consider two vertices $u$ and $v$ in $V(K_n)$. There are $n-1$ many edge-disjoint paths between $u$ and $v$ in $K_n$: $n-2$ many paths of length two of the form $(u,x,v)$, where $x\in V(K_n)-\{u,v\}$ and a path $(u,v)$ of length one; see Figure~\ref{path-matching}(a). By removing any matching in $\mathcal{M}$ from $K_n$, at most two paths disappear, because $u$ and $v$ have degree one in each matching. Thus in $G-\mathcal{M}$, there are $(n-1-2k)$ many edge-disjoint paths between $u$ and $v$, which implies that $K_n-\mathcal{M}$ is $(n-1-2k)$-connected.
\end{proof}

\begin{theorem}
\label{mp-thr}
 $mp(K_n)\ge\frac{n}{2}$.
\end{theorem}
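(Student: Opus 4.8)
The plan is to unwind the definition of matching persistency and reduce the claim to a statement about minimum degree. By definition, $mp(K_n)\ge \frac{n}{2}$ holds precisely when removing \emph{any} set $\mathcal{M}$ of $k$ edge-disjoint perfect matchings with $k\le \frac{n}{2}-1$ leaves a graph $G=K_n-\mathcal{M}$ that still contains a perfect matching. So I would fix such an $\mathcal{M}$ and argue directly that $G$ has a perfect matching; establishing this for every admissible $\mathcal{M}$ yields the bound.

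The key observation is that the regularity half of Lemma~\ref{regularity-connectivity} already pins down the minimum degree of $G$. Since $G$ is $(n-1-k)$-regular and $k\le \frac{n}{2}-1$, every vertex of $G$ has degree $n-1-k\ge n-1-(\frac{n}{2}-1)=\frac{n}{2}$. Thus $G$ is a graph on $n$ vertices with minimum degree at least $\frac{n}{2}$. First I would dispose of the trivial case $k=0$, where $G=K_n$ itself has a perfect matching because $n$ is even (this also covers $K_2$). For $k\ge 1$ we automatically have $n\ge 4$, so the degree-threshold machinery applies.

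By Dirac's theorem, a graph on $n\ge 3$ vertices whose minimum degree is at least $\frac{n}{2}$ contains a Hamiltonian cycle $C$. Since $n$ is even, $C$ is an even cycle, and its edges split alternately into two edge-disjoint perfect matchings of $G$; taking either one exhibits a perfect matching in $G=K_n-\mathcal{M}$. As this holds for every choice of $\mathcal{M}$ consisting of at most $\frac{n}{2}-1$ matchings, no such family can destroy all perfect matchings, which is exactly $mp(K_n)\ge \frac{n}{2}$.

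The delicate point is that the degree bound meets Dirac's threshold \emph{exactly} at the extreme case $k=\frac{n}{2}-1$, where $G$ is only $\frac{n}{2}$-regular (and, by Lemma~\ref{regularity-connectivity}, only $1$-connected). Hence I expect the main subtlety to be recognizing that it is the regularity (degree) half of Lemma~\ref{regularity-connectivity}, rather than its connectivity half, that drives the argument, and that the bound is tight precisely at this boundary---consistent with the matching upper bound $mp(K_n)\le \frac{n}{2}$ indicated by Figure~\ref{path-matching}(b).
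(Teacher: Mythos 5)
Your proposal is correct and follows essentially the same route as the paper: both invoke Lemma~\ref{regularity-connectivity} to get the degree bound $n-1-k\ge\frac{n}{2}$ for $k\le\frac{n}{2}-1$ and then apply Dirac's theorem to obtain a Hamiltonian cycle, which (as $n$ is even) splits into two perfect matchings. Your observation that only the regularity half of the lemma is really needed (the connectivity being implied by Dirac's hypothesis anyway) is a fair refinement, but it does not change the argument.
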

\begin{proof}
In order to prove the theorem, we show that by removing any set $\mathcal{M}$ of $k$ edge-disjoint perfect matchings from $K_n$, where $0\le k<\frac{n}{2}$, the resulting graph still has a perfect matching. By Lemma~\ref{regularity-connectivity}, the graph $K_n-\mathcal{M}$ is an $(n-1-k)$-regular graph which is $(n-1-2k)$-connected. Since $k<\frac{n}{2}$, $K_n-\mathcal{M}$ is a connected graph and the degree of each vertex is at least $\frac{n}{2}$. Thus, by a result of 
Dirac~\cite{Dirac1952}, $K_n-\mathcal{M}$ has a Hamiltonian cycle and consequently a perfect matching. Therefore, by removing $k$ arbitrary perfect matchings from $K_n$, where $k<\frac{n}{2}$, the resulting graph still has a perfect matching, which proves the claim.
\end{proof}

\begin{lemma}
\label{bipartite-matchings-lemma}
 If $n\equiv 2 \mod 4$, then $mp(K_n)\le\frac{n}{2}$.
\end{lemma}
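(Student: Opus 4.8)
The plan is to exhibit an explicit set of $\frac{n}{2}$ edge-disjoint perfect matchings of $K_n$ whose removal leaves a graph with no perfect matching, which by the definition of matching persistency immediately yields $mp(K_n)\le\frac{n}{2}$. The whole argument hinges on one parity remark: since $n\equiv 2\pmod 4$, the number $s:=\frac{n}{2}$ is \emph{odd}. This is precisely the graph $H$ hinted at in Figure~\ref{path-matching}(b).

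First I would partition $V(K_n)$ into two sets $A$ and $B$ with $|A|=|B|=s$, and let $H$ be the complete bipartite graph with parts $A$ and $B$. Then $H\cong K_{s,s}$ is a spanning $s$-regular bipartite subgraph of $K_n$. Because every regular bipartite graph is $1$-factorable, $H$ decomposes into $s=\frac{n}{2}$ edge-disjoint perfect matchings $M_1,\dots,M_{n/2}$; concretely, writing $A=\{a_0,\dots,a_{s-1}\}$ and $B=\{b_0,\dots,b_{s-1}\}$, the matchings $M_t=\{a_ib_{(i+t)\bmod s}:0\le i<s\}$ for $t=0,\dots,s-1$ partition $E(H)$. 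Since every edge of $H$ joins $A$ to $B$ and $|A|=|B|=s$, each $M_t$ saturates all $n$ vertices and is therefore a perfect matching of $K_n$.

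Next I would analyze the leftover graph $K_n-\bigcup_t M_t = K_n-H$. Deleting every edge between $A$ and $B$ leaves exactly the edges inside $A$ together with the edges inside $B$, so $K_n-H$ is the disjoint union of two complete graphs on $A$ and on $B$, with no edge joining the two parts. Any perfect matching of this graph would restrict to a perfect matching of each component; but $|A|=|B|=s$ is odd, so neither $K_{|A|}$ nor $K_{|B|}$ admits a perfect matching. Hence $K_n-H$ has no perfect matching, and the $\frac{n}{2}$ matchings $M_1,\dots,M_{n/2}$ witness $mp(K_n)\le\frac{n}{2}$.

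I do not anticipate a genuine obstacle here; the proof is a short construction. The only step requiring care is the parity observation, where the hypothesis $n\equiv 2\pmod 4$ is used exactly to force $s$ to be odd, which is what makes the two halves unmatchable. Notably, for $n\equiv 0\pmod 4$ the same $H$ fails, since each $K_s$ would then itself have a perfect matching, consistent with $\frac{n}{2}$ being only a lower bound in the general case (Theorem~\ref{mp-thr}).
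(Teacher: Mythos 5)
Your proof is correct and follows essentially the same route as the paper: remove the $\frac{n}{2}$ perfect matchings of a $1$-factorization of a spanning $K_{n/2,n/2}$, leaving two odd cliques with no perfect matching. The only cosmetic difference is that you exhibit the $1$-factorization explicitly via the cyclic matchings $M_t$, whereas the paper obtains it by iteratively applying Hall's theorem to the regular bipartite graph.
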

\begin{proof}
Let $H=K_{\frac{n}{2},\frac{n}{2}}$ be a complete bipartite subgraph of $K_n$. Note that $\frac{n}{2}$ is an odd number and $H$ is an $\frac{n}{2}$-regular graph.
According to Hall's marriage theorem \cite{Hall1935}, for $k\ge 1$, every $k$-regular bipartite graph contains a perfect matching \cite{Harary1991}. Since by the iterative removal of perfect matchings  from $H$ the resulting graph is still regular, the edges of $H$ can be partitioned into $\frac{n}{2}$ perfect matchings; see Figure~\ref{path-matching}(a). It is obvious that $K_n-H$ consists of two connected components of odd size. Thus, by removing the $\frac{n}{2}$ matchings in $H$, the resulting graph, $K_n-H$, does not have any perfect matching. This proves the claim.
\end{proof}

By Theorem~\ref{mp-thr} and Lemma~\ref{bipartite-matchings-lemma} we have the following corollary.

\begin{corollary}
If $n\equiv 2 \mod 4$, then $mp(K_n)=\frac{n}{2}$. 
\end{corollary}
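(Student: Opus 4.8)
The plan is simply to sandwich the quantity $mp(K_n)$ between the two bounds already established in the preceding results, since together they pin down its value exactly. The lower bound is supplied by Theorem~\ref{mp-thr}, which asserts $mp(K_n)\ge\frac{n}{2}$ for every even $n$; this holds unconditionally and in particular applies whenever $n\equiv 2\pmod 4$. The matching upper bound is the content of Lemma~\ref{bipartite-matchings-lemma}, which gives $mp(K_n)\le\frac{n}{2}$ precisely under the hypothesis $n\equiv 2\pmod 4$. Chaining these two inequalities immediately forces $mp(K_n)=\frac{n}{2}$.

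Concretely, I would first invoke Theorem~\ref{mp-thr} to record $\frac{n}{2}\le mp(K_n)$, then invoke Lemma~\ref{bipartite-matchings-lemma}, whose hypothesis $n\equiv 2\pmod 4$ matches the hypothesis of the corollary, to record $mp(K_n)\le\frac{n}{2}$. The conclusion $mp(K_n)=\frac{n}{2}$ is then just the antisymmetry of $\le$. It is worth emphasizing in the write-up \emph{why} the congruence condition appears only in the upper direction: the lower bound argument of Theorem~\ref{mp-thr} relies on Dirac's theorem and works for all even $n$, whereas the upper bound construction removes a perfect $1$-factorization of a complete bipartite subgraph $K_{n/2,n/2}$, and this construction breaks the graph into two components of size $\frac{n}{2}$ each, which fail to admit a perfect matching precisely because $\frac{n}{2}$ is odd, i.e.\ precisely when $n\equiv 2\pmod 4$.

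Since both inequalities are already fully proved in the excerpt, there is essentially no technical obstacle remaining; the only ``step'' is the bookkeeping of checking that the hypotheses of the two cited results are simultaneously satisfied, which they are. The entire argument is therefore a two-line deduction, and the main thing to get right is simply to cite Theorem~\ref{mp-thr} for the $\ge$ direction and Lemma~\ref{bipartite-matchings-lemma} for the $\le$ direction rather than re-deriving either bound.
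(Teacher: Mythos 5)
Your proposal is correct and is exactly the paper's argument: the corollary is stated as an immediate consequence of Theorem~\ref{mp-thr} (the unconditional lower bound) and Lemma~\ref{bipartite-matchings-lemma} (the upper bound under $n\equiv 2\pmod 4$). Nothing further is needed.
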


In the rest of this section we consider plane matching removal from geometric graphs.

Let $P$ be a set of $n$ points in general position in the plane, with $n$ even. Given a geometric graph $G$ on $P$, we say that $G$ is {\em plane matching persistent} if by removing any plane perfect matching $M$ from $G$, the resulting graph, $G-M$, has a plane perfect matching. We define the {\em plane matching persistency} of $G$, denoted by $pmp(G)$ as the size of the smallest set $\mathcal{M}$ of edge-disjoint plane perfect matchings that can be removed from $G$ such that $G-\mathcal{M}$ does not have any plane perfect matching. In particular, $G$ is plane matching persistent iff $pmp(G)\ge 2$.

Aichholzer et al.~\cite{Aichholzer2010} and Perles (see~\cite{Keller2012}) showed that by removing any set of at most $\frac{n}{2}-1$ edges from $\Kn{(P)}$, the resulting graph has a plane perfect matching. This bound is tight~\cite{Aichholzer2010}; that is, there exists a point set $P$ such that by removing a set $H$ of $\frac{n}{2}$ edges from $\Kn{(P)}$ the resulting graph does not have any plane perfect matching. In the examples provided by~\cite{Aichholzer2010}, the $\frac{n}{2}$ edges in $H$ form a connected component which has $\frac{n}{2}+1$ vertices. 

Thus, one may think if the removed edges are disjoint, it may be possible to remove more than $\frac{n}{2}-1$ edges while the resulting graph has a plane perfect matching.
In the following lemma we show that by removing any plane perfect matching, i.e., a set of $\frac{n}{2}$ disjoint edges, from $\Kn{(P)}$, the resulting graph still has a perfect matching.

\begin{lemma}
\label{pmp2-lemma}
Let $P$ be a set of $n$ points in general position in the plane with $n$ even, then $pmp(\Kn{(P)})\ge 2$.
\end{lemma}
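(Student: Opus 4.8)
The goal is to show $pmp(\Kn{(P)}) \ge 2$, meaning that after removing \emph{any single} plane perfect matching $M$ from $\Kn{(P)}$, the remaining geometric graph still contains a plane perfect matching. The plan is to produce a second plane matching $M'$ that is edge-disjoint from the given one. The natural idea is to invoke the tight existence result quoted just before the lemma: Aichholzer et al.\ and Perles showed that deleting any set of at most $\frac{n}{2}-1$ edges from $\Kn{(P)}$ still leaves a plane perfect matching. A perfect matching $M$ has exactly $\frac{n}{2}$ edges, which is one too many to apply that result directly, so the removed matching sits exactly on the boundary of the guaranteed range. The key observation I would exploit is that $M$ consists of $\frac{n}{2}$ \emph{pairwise disjoint} edges, whereas the extremal bad examples from~\cite{Aichholzer2010} require the $\frac{n}{2}$ removed edges to form a connected component spanning $\frac{n}{2}+1$ vertices. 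Disjointness of $M$ should rule out that obstruction.

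Concretely, first I would pick one edge $e = pq \in M$ and set it aside, so that only the other $\frac{n}{2}-1$ edges of $M$ are ``forbidden.'' By the Aichholzer--Perles result, the graph $\Kn{(P)}$ minus those $\frac{n}{2}-1$ edges contains a plane perfect matching $M'$. The only way $M'$ could fail to be edge-disjoint from $M$ is if $M'$ uses the one remaining forbidden-free edge $e = pq$. So the remaining task reduces to forcing the existence of a plane perfect matching that avoids \emph{all} of $M$, including $e$. I would handle this by treating the endpoints of $e$ specially: for instance, remove $p$ and $q$ from $P$, find a plane perfect matching on the $n-2$ remaining points that avoids the forbidden edges (again via the $\frac{n}{2}-1$ deletion bound, which easily covers the smaller instance), and then match $p$ and $q$ into this structure through some local re-routing argument that does not reintroduce any edge of $M$. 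Alternatively, one can argue that since $M$ is a matching, each vertex is incident to exactly one forbidden edge, so the forbidden set is a $1$-regular subgraph; a counting or connectivity argument analogous to Lemma~\ref{regularity-connectivity} should show that no ``odd-cut'' obstruction of the bad type can arise from a $1$-regular removal.

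The step I expect to be the main obstacle is precisely bridging the gap between $\frac{n}{2}-1$ and $\frac{n}{2}$: the existence theorem I want to lean on is stated for arbitrary edge sets of size $\frac{n}{2}-1$, and I must argue that the extra disjoint edge $e$ cannot create a new obstruction. This is where the structural description of the tight examples matters --- their forbidden edges form a single connected piece on $\frac{n}{2}+1$ vertices, which a perfect matching can never do. I would make this rigorous by carefully characterizing which $\frac{n}{2}$-edge sets block a plane perfect matching (invoking Keller--Perles~\cite{Keller2012}, whose characterization is cited earlier) and then checking that a perfect matching, being $1$-regular and spanning all $n$ vertices, never matches that characterization. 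If the characterization is clean enough, the whole lemma follows almost immediately: a blocking set of $\frac{n}{2}$ edges must concentrate on $\frac{n}{2}+1$ vertices, leaving at least one vertex untouched and hence cannot be a perfect matching. I would present that contrast as the crux of the argument.
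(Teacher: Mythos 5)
There is a genuine gap here. Your whole argument funnels into the claim that a set of $\frac{n}{2}$ pairwise disjoint edges can never be a ``blocking set'' for plane perfect matchings, and the tool you propose for this --- the Keller--Perles characterization of blocking sets of size $\frac{n}{2}$ --- is, as the paper's own discussion indicates, a result about \emph{convex} geometric graphs, whereas the lemma concerns points in general position. The paper only remarks that \emph{in the examples of}~\cite{Aichholzer2010} the $\frac{n}{2}$ removed edges form a connected piece on $\frac{n}{2}+1$ vertices; that is an observation about particular examples, not a characterization you can invoke for arbitrary $P$. Moreover, even in the convex case you never actually carry out the verification that no perfect matching satisfies the characterization --- you defer it with ``if the characterization is clean enough,'' which leaves the crux unproven. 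Your two fallback routes also do not close the gap: setting aside one edge $e=pq$ of $M$, finding $M'$ avoiding the other $\frac{n}{2}-1$ edges, and then ``re-routing'' around $e$ is precisely the hard step and is left as a sketch; and a connectivity argument in the style of Lemma~\ref{regularity-connectivity} only produces a perfect matching in the abstract graph, with no control over crossings, so it cannot deliver a \emph{plane} matching.

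The paper's proof is a short direct construction that sidesteps all of this. Given the plane matching $M$, color the points of $P$ with $\frac{n}{2}$ colors so that the two endpoints of each edge of $M$ receive the same color; each color class then has exactly two points, so at most half the points share a color, and Theorem~\ref{Aichholzer} yields a plane perfect matching $M'$ in which every edge joins two points of \emph{distinct} colors. Since every edge of $M$ is monochromatic and every edge of $M'$ is bichromatic, $M$ and $M'$ are edge-disjoint, so $\Kn{(P)}-M$ still has a plane perfect matching. To salvage your approach you would need a general-position analogue of the blocking-set characterization, which is a substantially harder statement than the lemma itself.
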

\begin{proof}
 Let $M$ be any plane perfect matching in $\Kn{(P)}$. Assign $\frac{n}{2}$ distinct colors to the points in $P$ such that both endpoints of every edge in $M$ have the same color. By Theorem~\ref{Aichholzer}, $P$ has a plane colored matching, say $M'$. Since both endpoints of every edge in $M$ have the same color while the endpoints of every edge in $M'$ have distinct colors, $M$ and $M'$ are edge-disjoint. Therefore, by removing any plane perfect matching from $\Kn{(P)}$, the resulting graph still has a plane perfect matching, which implies that $pmp(\Kn{(P)})\ge 2$.
\end{proof}

\begin{theorem}
For a set $P$ of $n\ge 4$ points in convex position in the plane with $n$ even, $pmp(\Kn{(P)})=2$.
\end{theorem}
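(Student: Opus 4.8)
The plan is to combine the lower bound that is already available with a matching upper bound. Lemma~\ref{pmp2-lemma} gives $pmp(\Kn{(P)})\ge 2$ for every point set in general position, so it remains only to exhibit two edge-disjoint plane perfect matchings whose removal destroys \emph{every} plane perfect matching; this yields $pmp(\Kn{(P)})\le 2$ and hence equality.

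The key tool is Lemma~\ref{two-convex-edges}: since $P$ is in convex position with $n\ge 4$ even, every plane perfect matching of $P$ uses at least two edges of $\CH{P}$, and because all $n$ points lie on the hull, $\CH{P}$ has exactly $n$ edges. First I would label the points $p_0,p_1,\dots,p_{n-1}$ in their cyclic order around $\CH{P}$, so that the hull edges are precisely $p_0p_1,p_1p_2,\dots,p_{n-1}p_0$. I would then take the two ``alternating'' hull matchings
\[
M_1=\{p_0p_1,\,p_2p_3,\,\dots,\,p_{n-2}p_{n-1}\},\qquad
M_2=\{p_1p_2,\,p_3p_4,\,\dots,\,p_{n-1}p_0\}.
\]
Because $n$ is even, each of $M_1,M_2$ pairs up all $n$ points, and since their edges are boundary edges of the convex hull they are pairwise non-crossing; hence both are plane perfect matchings. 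The set $M_1$ consists of the even-indexed hull edges and $M_2$ of the odd-indexed ones, so $M_1$ and $M_2$ are edge-disjoint, and together $M_1\cup M_2$ contains \emph{all} $n$ edges of $\CH{P}$.

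The conclusion then follows at once: in $\Kn{(P)}-\{M_1,M_2\}$ not a single edge of $\CH{P}$ survives. By Lemma~\ref{two-convex-edges}, any plane perfect matching of $P$ must contain two hull edges, so no plane perfect matching can be packed into the remaining graph. Thus removing $M_1$ and $M_2$ leaves a graph with no plane perfect matching, giving $pmp(\Kn{(P)})\le 2$, which together with Lemma~\ref{pmp2-lemma} proves $pmp(\Kn{(P)})=2$.

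I do not expect a genuine obstacle here; the points to check carefully are only that the two alternating hull matchings are well-defined perfect matchings (which is why $n$ must be even) and that between them they exhaust \emph{every} hull edge rather than just the two guaranteed by Lemma~\ref{two-convex-edges}. The real content is the observation that two plane matchings can jointly cover the entire convex-hull boundary, after which Lemma~\ref{two-convex-edges} does all the work.
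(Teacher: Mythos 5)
Your proposal is correct and follows essentially the same route as the paper: the lower bound comes from Lemma~\ref{pmp2-lemma}, and the upper bound comes from splitting the $n$ hull edges into the two alternating plane perfect matchings and invoking Lemma~\ref{two-convex-edges} to conclude that no plane perfect matching survives their removal. You have merely spelled out explicitly the decomposition of $\CH{P}$ that the paper states in one line.
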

\begin{proof}
By Lemma~\ref{pmp2-lemma}, $pmp(\Kn{(P)})\ge2$. In order to prove the theorem, we need to show that $pmp(\Kn{(P)})\le2$. Let $M_1$ and $M_2$ be two edge-disjoint plane matchings obtained from \CH{P}. By Lemma~\ref{two-convex-edges}, any plane perfect matching in $\Kn{(P)}$ contains at least two edges of \CH{P}, while $\Kn{(P)}-\{M_1\cup M_2\}$ does not have convex hull edges, and hence does not have any plane perfect matching. Therefore, $pmp(\Kn{(P)})\le2$. 
\end{proof}

\begin{lemma}
There exists a point set $P$ in general position such that $pmp(K(P))\ge 3$.
\end{lemma}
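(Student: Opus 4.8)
The plan is to exhibit a concrete configuration $P$ and show that no set of at most two edge-disjoint plane perfect matchings can be removed from $\Kn{(P)}$ without leaving a plane perfect matching behind; together with Lemma~\ref{pmp2-lemma}, which already handles the removal of a single matching, this yields $\pmp{\Kn{(P)}}\ge 3$. The only genuinely new case is therefore: given any two edge-disjoint plane perfect matchings $M_1$ and $M_2$, produce a third plane perfect matching $M_3$ that is edge-disjoint from both. My main engine for producing $M_3$ is the colored-matching theorem (Theorem~\ref{Aichholzer}), used exactly as in Lemma~\ref{pmp2-lemma}, except that now the set of forbidden edges is $M_1\cup M_2$ rather than a single matching.

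First I would reduce the statement to a purely combinatorial condition on cycle lengths. By Observation~\ref{even-cycle-obs}, $U=M_1\cup M_2$ is a disjoint union of even cycles $C_1,\dots,C_t$ whose vertex sets partition $P$. Assign to the points of each cycle $C_i$ its own color; then every edge of $U$ is monochromatic. If each color class satisfies $|C_i|\le n/2$, then Theorem~\ref{Aichholzer} produces a plane colored perfect matching $M_3$, and since every edge of $M_3$ joins points of distinct colors while every edge of $U$ is monochromatic, $M_3$ can share no edge with $U$; hence $M_3$ is edge-disjoint from both $M_1$ and $M_2$. Thus it suffices to find a configuration $P$ in which \emph{the union of any two edge-disjoint plane perfect matchings contains no cycle on more than $n/2$ vertices}. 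Note that the sole obstruction is a single ``long'' cycle: a cycle on more than $n/2$ vertices is necessarily unique and forces a color class of size exceeding $n/2$, at which point Theorem~\ref{Aichholzer} no longer applies.

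This is where the choice of configuration matters, and where the convex case must be avoided. For points in convex position the two plane matchings extracted from $\CH{P}$ have union equal to the boundary cycle, a single Hamiltonian cycle; by Lemma~\ref{two-convex-edges} every plane matching needs two hull edges, so once all hull edges are removed no plane matching survives — which is exactly why convex point sets have persistency $2$. I would therefore take a non-convex configuration with interior points (for instance two far-apart convex clusters, or a wheel-type configuration as in Section~\ref{wheel-section}) so that the matchings available to $M_3$ are not confined to a single set of hull edges, and so that Lemma~\ref{two-convex-edges} cannot be used against us. The geometric work is to bound, using a separating line together with Lemma~\ref{two-convex-edges} applied inside each cluster, the number of times any cycle of $U$ can cross between the clusters, thereby capping every $|C_i|$ at $n/2$.

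The hard part will be precisely this last step: ruling out a long (near-Hamiltonian) cycle in $M_1\cup M_2$, since such a cycle is genuinely fatal to the colored-matching argument. A long cycle must alternate edges of $M_1$ and $M_2$ and cross the separating line repeatedly; I would control these crossings by exploiting that, for two well-separated clusters, the cross-cluster edges of each individual plane matching form a \emph{non-crossing} (hence order-preserving) bipartite matching, and combine this with the hull-edge count inside each cluster. If the balancing argument still leaves a bad pair $(M_1,M_2)$, the fallback is to fix a single small explicit point set — the one in the accompanying figure — and verify the cycle-length bound by a finite case analysis over the few combinatorial types of plane perfect matchings it admits, so that the reduction above then completes the proof.
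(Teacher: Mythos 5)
Your reduction via Theorem~\ref{Aichholzer} is sound as far as it goes: if every cycle of $M_1\cup M_2$ has at most half the points, then coloring each cycle with its own color yields a plane colored matching $M_3$ that avoids $M_1\cup M_2$. The fatal problem is that the hypothesis you then set out to arrange --- a point set in which the union of \emph{any} two edge-disjoint plane perfect matchings contains no cycle on more than half the vertices --- is satisfied by no point set whatsoever. Every point set in general position admits a plane Hamiltonian cycle (e.g.\ the minimum-weight one, which the paper itself uses to get the trivial lower bound of two matchings), and its two alternating perfect matchings are plane, edge-disjoint, and have union equal to that single cycle through all the points. So the ``long cycle'' case you defer as the hard part is not a corner case to be excluded geometrically; it is unavoidable for every configuration, and your fallback of a finite case analysis on a small explicit example would simply confirm that the condition fails there as well. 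More structurally, any coloring that makes every edge of $M_1\cup M_2$ monochromatic must make each connected component monochromatic, so the colored-matching machinery that yields $pmp(K(P))\ge 2$ in Lemma~\ref{pmp2-lemma} (where components have size $2$) cannot by itself get past the removal of two matchings.

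The paper's proof takes a genuinely different, constructive route: it places $3n$ points as $n$ nested triangles $t_i=(a_i,b_i,c_i)$, observes via Observation~\ref{even-cycle-obs} that the $2$-regular graph $G=M_1\cup M_2$ contains no odd cycle and hence misses at least one edge of each $t_i$, seeds $M_3$ with one missing edge per triangle, and then completes $M_3$ by a short local case analysis on consecutive pairs of triangles, always selecting edges outside $G$ using only the degree-two bound and the absence of odd cycles. That argument is indifferent to how long the cycles of $G$ are, which is precisely what your approach lacks. To salvage your plan you would need a mechanism for producing $M_3$ that survives the Hamiltonian-cycle case; the colored-matching theorem alone cannot supply it.
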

\begin{proof}
We prove this lemma by providing an example. Figure~\ref{pmp3-fig}(a) shows a set $P=\{a_1,\dots,a_n,\allowbreak b_1,\dots,b_n,\allowbreak c_1,\dots,c_n\}$ of $3n$ points in general position, where $n$ is an even number. In order to prove that $pmp(K(P))\ge 3$, we show that by removing any two edge-disjoint plane matchings from $K(P)$, the resulting graph still has a plane perfect matching. Let $M_1$ and $M_2$ be any two plane perfect matchings in $K(P)$. Let $G$ be the subgraph of $K(P)$ induced by the edges in $M_1\cup M_2$. Note that $G$ is a 2-regular graph and by Observation~\ref{even-cycle-obs} does not contain any odd cycle. For each $1\le i\le n$, let $t_i$ be the triangle which is defined by the three points $a_i$, $b_i$, and $c_i$. Let $\mathcal{T}$ be the set of these $n$ (nested) triangles. Since $G$ does not have any odd cycle, for each $t_i\in\mathcal{T}$, at least one edge of $t_i$ is not in $G$. Let $M_3$ be the matching containing an edge $e_i$ from each $t_i\in \mathcal{T}$ such that $e_i\notin G$. See Figure~\ref{pmp3-fig}(b). Now we describe how to complete $M_3$, i.e., complete it to a perfect matching. Partition the triangles in $\mathcal{T}$ into $\frac{n}{2}$ pairs of consecutive triangles. For each pair $(t_i,t_{i+1})$ of consecutive triangles we complete $M_3$ locally\textemdash on $a_i,b_i,c_i,a_{i+1},b_{i+1},c_{i+1}$\textemdash in the following way. Let $t_i=(a_i,b_i,c_i)$ and $t_{i+1}=(a_{i+1},b_{i+1},c_{i+1})$. See Figure~\ref{pmp3-fig}(c). W.l.o.g. assume that $M_3$ contains $a_ib_i$ and $a_{i+1}c_{i+1}$, that is $a_ib_i\notin G$ and $a_{i+1}c_{i+1}\notin G$. If $c_ib_{i+1}\notin G$, then we complete $M_3$ by adding $c_ib_{i+1}$. If $c_ib_{i+1}\in G$, then $a_{i+1}b_{i+1}\notin G$ or $c_{i+1}b_{i+1}\notin G$ because $b_{i+1}$ has degree two in $G$. W.l.o.g. assume that $a_{i+1}b_{i+1}\notin G$. Then we modify $M_3$ by removing $a_{i+1}c_{i+1}$ and adding $a_{i+1}b_{i+1}$. Now, if $c_ic_{i+1}\notin G$, then we complete $M_3$ by adding $c_ic_{i+1}$. If $c_ic_{i+1}\in G$, then by Observation~\ref{even-cycle-obs}, $b_{i+1}c_{i+1}\notin G$. We modify $M_3$ by removing $a_{i+1}b_{i+1}$ and adding $b_{i+1}c_{i+1}$. At this point, since $c_ib_{i+1}$ and $c_ic_{i+1}$ are in $G$, $c_ia_{i+1}\notin G$ and we complete $M_3$ by adding $c_ia_{i+1}$.
\end{proof}
\begin{figure}[htb]
  \centering
\setlength{\tabcolsep}{0in}
  $\begin{tabular}{ccc}
\multicolumn{1}{m{.33\columnwidth}}{\centering\includegraphics[width=.3\columnwidth]{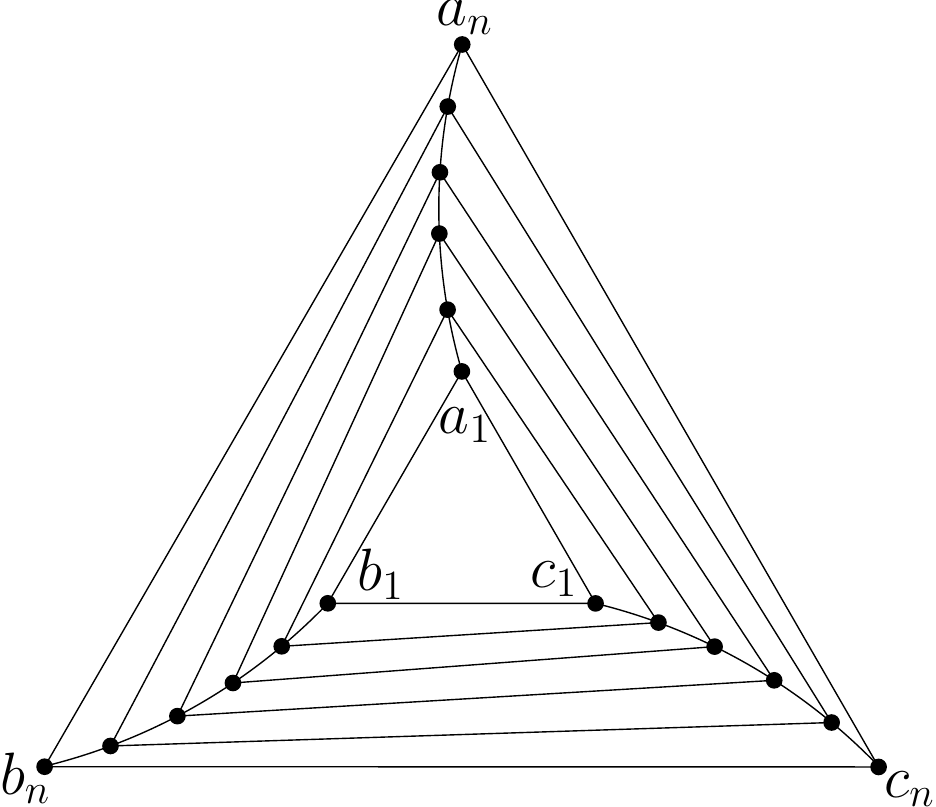}}&
\multicolumn{1}{m{.33\columnwidth}}{\centering\includegraphics[width=.3\columnwidth]{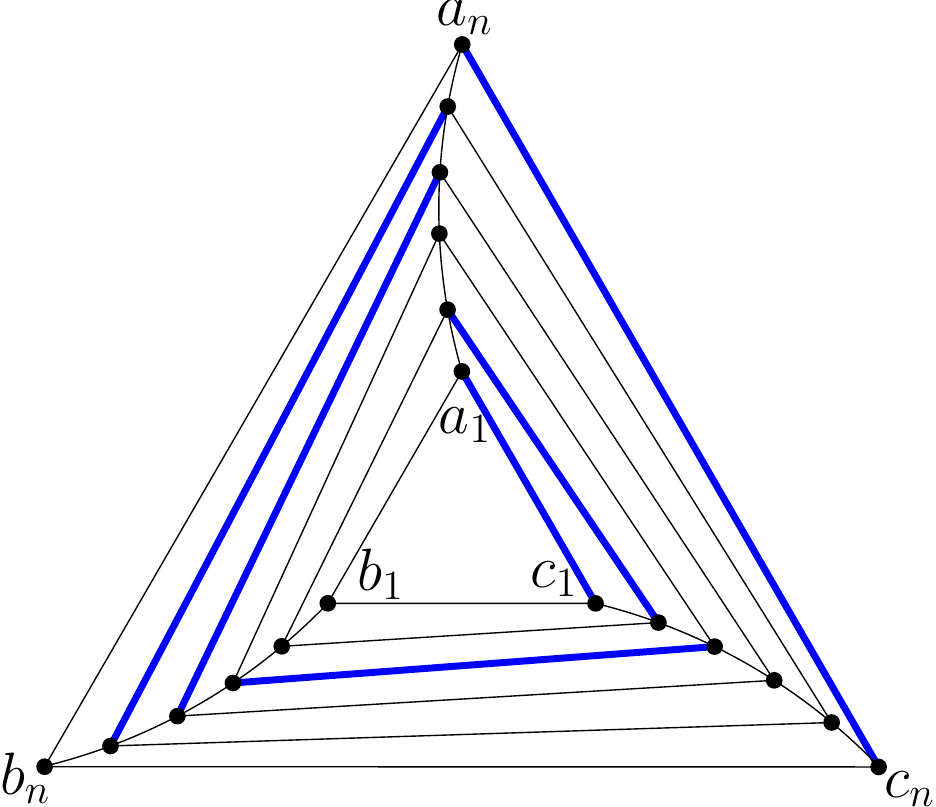}}
&\multicolumn{1}{m{.33\columnwidth}}{\centering\includegraphics[width=.3\columnwidth]{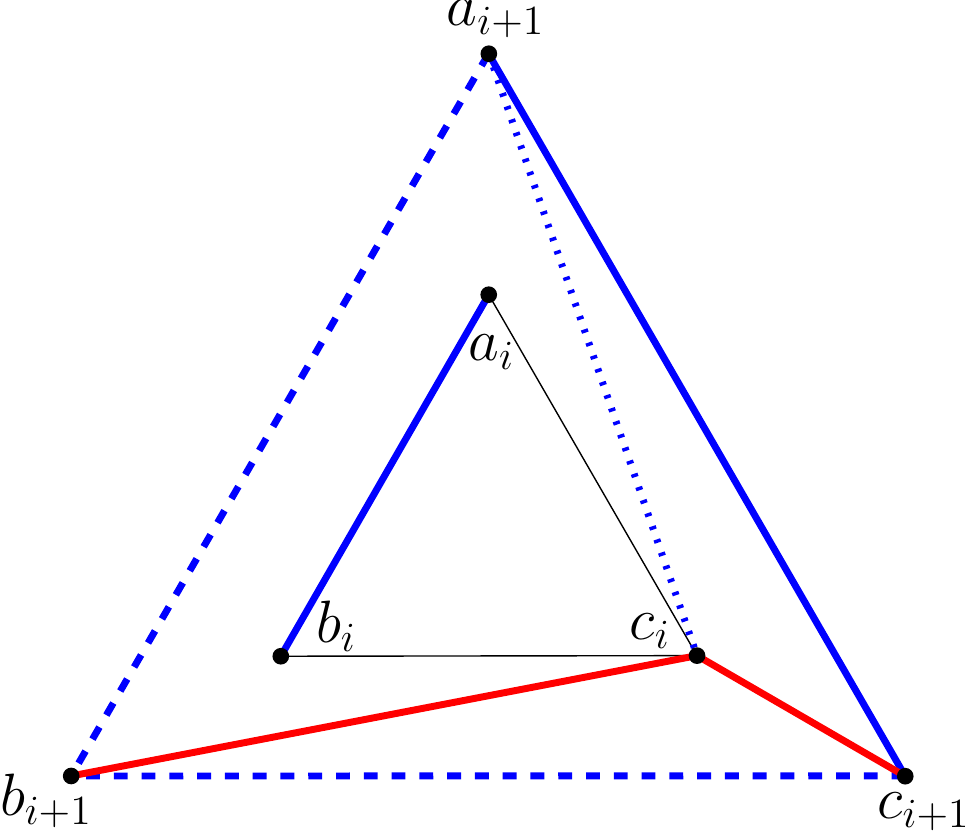}} \\
(a) & (b)&(c)
\end{tabular}$
\caption{(a) Set $P$ of $3n$ points in general position. (b) $M_3$ contains one edge from each triangle. (c) Locally converting $M_3$ to a perfect matching, for $t_i$ and $t_{i+1}$.}
\label{pmp3-fig}
\end{figure}

\section{Conclusion}
\label{conclusion}
In this paper, we considered the problem of packing edge-disjoint plane perfect matchings in a complete geometric graph $\Kn{(P)}$ on a set $P$ of $n$ points in general position in the plane. We proved that
\begin{itemize}
  \item at least $\lceil\log_2{n}\rceil-2$ plane matchings can be packed into $\Kn{(P)}$,
  \item at least two and at most five non-crossing plane matchings can be packed into $\Kn{(P)}$,
  \item the plane matching persistency of $\Kn{(P)}$ is at least two.
\end{itemize}
In addition, for some special configurations of $P$ we showed that
\begin{itemize}
  \item if $P$ is in convex position, then $\frac{n}{2}$ plane matchings can be packed into $\Kn{(P)}$,
  \item if $P$ is in a regular wheel configuration, then $\frac{n}{2}-1$ plane matchings can be packed into $\Kn{(P)}$,
  \item there exists a set $P$ such that no more than $\lceil\frac{n}{3}\rceil$ plane matchings can be packed into $\Kn{(P)}$,
  \item if $P$ is in convex position, the plane matching persistency of $\Kn{(P)}$ is two,
  \item there exists a set $P$ such that the plane matching persistency of $\Kn{(P)}$ is at least three.
\end{itemize}

We leave a number of open problems:
\begin{itemize}
  \item We believe that the number of plane matchings that can be packed into $\Kn{(P)}$ is linear in $n$. Thus, improving the lower bound of $\lceil\log_2{n}\rceil-2$ is the main open problem.
  \item Is there an upper bound better than $n-1$, on the number of plane matchings that can be packed into $\Kn{(P)}$, where $n>6$?
  \item Providing point sets with large plane matching persistency.
\end{itemize}
 
\bibliographystyle{abbrv}
\bibliography{Matching-Packing.bib}

\begin{thebibliography}{10}

\bibitem{Putnam1979}
The 1979 {P}utnam exam.
\newblock In G.~L. Alexanderson, L.~F. Klosinski, and L.~C. Larson, editors,
  {\em The William Lowell Putnam Mathematical Competition Problems and
  Solutions: 1965-1984}. Mathematical Association of America, USA, 1985.

\bibitem{Aichholzer2010}
O.~Aichholzer, S.~Cabello, R.~F. Monroy, D.~Flores{-}Pe{\~{n}}aloza, T.~Hackl,
  C.~Huemer, F.~Hurtado, and D.~R. Wood.
\newblock Edge-removal and non-crossing configurations in geometric graphs.
\newblock {\em Discrete Mathematics {\&} Theoretical Computer Science},
  12(1):75--86, 2010.

\bibitem{Aichholzer2014}
O.~Aichholzer, T.~Hackl, M.~Korman, M.~van Kreveld, M.~L\"{o}ffler, A.~Pilz,
  B.~Speckmann, and E.~Welzl.
\newblock Packing plane spanning trees and paths in complete geometric graphs.
\newblock In {\em Proc. $26^{th}$ Annual Canadian Conference on Computational
  Geometry, {CCCG'14}, Halifax, Nova Scotia, Canada}, page to appear, 2014.

\bibitem{Ajtai1982}
M.~Ajtai, V.~Chv\'{a}tal, M.~M. Newborn, and E.~Szemer\'{e}di.
\newblock Crossing-free subgraphs.
\newblock {\em Ann. Discrete Math.}, 12:9--12, 1982.

\bibitem{Akiyama1989}
J.~Akiyama and N.~Alon.
\newblock Disjoint simplices and geometric hypergraphs.
\newblock {\em Annals of the New York Academy of Sciences}, 555(1):1--3, 1989.

\bibitem{Alon1989}
N.~Alon and P.~Erd{\H{o}}s.
\newblock Disjoint edges in geometric graphs.
\newblock {\em Discrete {\&} Computational Geometry}, 4:287--290, 1989.

\bibitem{Avital1966}
S.~Avital and H.~Hanani.
\newblock Graphs (in {H}ebrew).
\newblock {\em Gilyonot Lematematika}, 3:2--8, 1966.

\bibitem{Bernhart1979}
F.~Bernhart and P.~C. Kainen.
\newblock The book thickness of a graph.
\newblock {\em J. Comb. Theory, Ser. {B}}, 27(3):320--331, 1979.

\bibitem{Bose2006}
P.~Bose, F.~Hurtado, E.~Rivera{-}Campo, and D.~R. Wood.
\newblock Partitions of complete geometric graphs into plane trees.
\newblock {\em Comput. Geom.}, 34(2):116--125, 2006.

\bibitem{Callan2009}
D.~Callan.
\newblock A combinatorial survey of identities for the double factorial.
\newblock arXiv: 0906.1317, 2009.

\bibitem{Cerny2005}
J.~{\v{C}}ern{\'{y}}.
\newblock Geometric graphs with no three disjoint edges.
\newblock {\em Discrete {\&} Computational Geometry}, 34(4):679--695, 2005.

\bibitem{Cerny2007}
J.~{\v{C}}ern{\'{y}}, Z.~Dvo\v{r}\'{a}k, V.~Jel{\'{\i}}nek, and J.~K{\'{a}}ra.
\newblock Noncrossing {H}amiltonian paths in geometric graphs.
\newblock {\em Discrete Applied Mathematics}, 155(9):1096--1105, 2007.

\bibitem{Dirac1952}
G.~A. Dirac.
\newblock Some theorems on abstract graphs.
\newblock {\em Proceedings of the London Mathematical Society}, 2:69--81, 1952.

\bibitem{Dumitrescu1999}
A.~Dumitrescu.
\newblock On two lower bound constructions.
\newblock In {\em Proceedings of the 11th Canadian Conference on Computational
  Geometry, UBC, Vancouver, British Columbia, Canada, August 15-18, 1999},
  1999.

\bibitem{Erdos1946}
P.~Erd{\H{o}}s.
\newblock On the set of distances of $n$ points.
\newblock {\em The American Mathematical Monthly}, 53(5):248--250, 1946.

\bibitem{Felsner2004}
S.~Felsner.
\newblock {\em Geometric Graphs and Arrangements: Some Chapters from
  Combinatorial Geometry}.
\newblock Vieweg, Wiesbaden, 2004.

\bibitem{Garcia2000}
A.~Garc{\'{\i}}a, M.~Noy, and J.~Tejel.
\newblock Lower bounds on the number of crossing-free subgraphs of {$K_N$}.
\newblock {\em Comput. Geom.}, 16(4):211--221, 2000.

\bibitem{Goddard1996}
W.~Goddard, M.~Katchalski, and D.~J. Kleitman.
\newblock Forcing disjoint segments in the plane.
\newblock {\em Eur. J. Comb.}, 17(4):391--395, 1996.

\bibitem{Hall1935}
P.~Hall.
\newblock On representatives of subsets.
\newblock {\em Journal of the London Mathematical Society}, s1-10(1):26--30,
  1935.

\bibitem{Harary1991}
F.~Harary.
\newblock {\em Graph theory}.
\newblock Addison-Wesley, 1991.

\bibitem{Hasheminezhad2011}
M.~Hasheminezhad, B.~D. McKay, and T.~Reeves.
\newblock Recursive generation of simple planar 5-regular graphs and
  pentangulations.
\newblock {\em J. Graph Algorithms Appl.}, 15(3):417--436, 2011.

\bibitem{Hershberger1992}
J.~Hershberger and S.~Suri.
\newblock Applications of a semi-dynamic convex hull algorithm.
\newblock {\em {BIT}}, 32(2):249--267, 1992.

\bibitem{Hopf1934}
H.~Hopf and E.~Pannwitz.
\newblock Aufgabe no. 167.
\newblock {\em Jahresbericht der Deutschen Mathematiker-Vereinigung}, 43:114,
  1934.

\bibitem{Keller2012}
C.~Keller and M.~Perles.
\newblock On the smallest sets blocking simple perfect matchings in a convex
  geometric graph.
\newblock {\em Israel Journal of Mathematics}, 187(1):465--484, 2012.

\bibitem{Kundu1974}
S.~Kundu.
\newblock Bounds on the number of disjoint spanning trees.
\newblock {\em Journal of Combinatorial Theory, Series B}, 17(2):199--203,
  1974.

\bibitem{Kupitz1979}
Y.~Kupitz.
\newblock Extremal problems in combinatorial geometry.
\newblock {\em Aarhus University Lecture Notes Series}, 53, 1979.

\bibitem{Lo1994}
C.~Lo, J.~Matousek, and W.~L. Steiger.
\newblock Algorithms for ham-sandwich cuts.
\newblock {\em Discrete {\&} Computational Geometry}, 11:433--452, 1994.

\bibitem{Motzkin1948}
T.~Motzkin.
\newblock Relations between hypersurface cross ratios, and a combinatorial
  formula for partitions of a polygon, for permanent preponderance, and for
  non-associative products.
\newblock {\em Bull. Amer. Math. Soc.}, 54(4):352--360, 1948.

\bibitem{Nash-Williams1961}
C.~{\relax St}. J.~A. Nash-Williams.
\newblock Edge-disjoint spanning trees of finite graphs.
\newblock {\em Journal of the London Mathematical Society}, 36(1):445--450,
  1961.

\bibitem{Newborn1980}
M.~M. Newborn and W.~O.~J. Moser.
\newblock Optimal crossing-free {H}amiltonian circuit drawings of ${K}_n$.
\newblock {\em J. Comb. Theory, Ser. {B}}, 29(1):13--26, 1980.

\bibitem{Pach1994}
J.~Pach and J.~T{\"{o}}r{\H{o}}csik.
\newblock Some geometric applications of {D}ilworth's theorem.
\newblock {\em Discrete {\&} Computational Geometry}, 12:1--7, 1994.

\bibitem{Santos2003}
F.~Santos and R.~Seidel.
\newblock A better upper bound on the number of triangulations of a planar
  point set.
\newblock {\em J. Comb. Theory, Ser. {A}}, 102(1):186--193, 2003.

\bibitem{Sharir2006}
M.~Sharir and E.~Welzl.
\newblock On the number of crossing-free matchings, cycles, and partitions.
\newblock {\em {SIAM} J. Comput.}, 36(3):695--720, 2006.

\bibitem{Sitton1996}
D.~Sitton.
\newblock Maximum matchings in complete multipartite graphs.
\newblock {\em Furman University Electronic Journal of Undergraduate
  Mathematics}, 2:6--16, 1996.

\bibitem{Soifer2009}
A.~Soifer.
\newblock {\em The Mathematical Coloring Book: Mathematics of Coloring and the
  Colorful Life of its Creators}.
\newblock New York: Springer, 2009.

\bibitem{Toth2000}
G.~T{\'{o}}th.
\newblock Note on geometric graphs.
\newblock {\em J. Comb. Theory, Ser. {A}}, 89(1):126--132, 2000.

\bibitem{Toth1999}
G.~T{\'{o}}th and P.~Valtr.
\newblock Geometric graphs with few disjoint edges.
\newblock {\em Discrete {\&} Computational Geometry}, 22(4):633--642, 1999.

\bibitem{Tutte1961}
W.~T. Tutte.
\newblock On the problem of decomposing a graph into $n$ connected factors.
\newblock {\em Journal of the London Mathematical Society}, 36(1):221--230,
  1961.

\end{thebibliography}
\end{document}